\documentclass[sigconf,nonacm]{acmart}

\newtheorem{theorem}{\bf Theorem}
\newtheorem{property}{\bf Property}
\newtheorem{lemma}{\bf Lemma}
\newtheorem{definition}{\bf Definition}

\newcommand{\syntacBC}{SyntacBC\xspace}
\newcommand{\semanticBC}{SemanticBC\xspace}
\usepackage{graphics}
\usepackage{algorithm}
\usepackage{caption}
\usepackage{subcaption}
\usepackage[noend]{algpseudocode}
\usepackage{multirow}
\usepackage{xspace}
\usepackage{colortbl}
\usepackage{bm}
\usepackage{enumitem}
\graphicspath{{./imgs/}}

\begin{document}
\title{Identifying Boundary Conditions with the Syntax and Semantic Information of Goals}

\author{Yechuan Xia}
\affiliation{%
  \institution{East China Normal University}
  \city{Shanghai}
  \country{China}
}
\email{xiaozi465@gmail.com}

\author{Jianwen Li}
\affiliation{%
  \institution{East China Normal University}
  \city{Shanghai}
  \country{China}
}
\email{lijwen2748@gmail.com}

\author{Shengping Xiao}
\affiliation{%
  \institution{East China Normal University}
  \city{Shanghai}
  \country{China}
}
\email{shengping.xiao98@gmail.com}

\author{Weikai Miao}
\affiliation{%
  \institution{Shanghai Key Lab for Trustworthy Computing}
  \institution{East China Normal University}
  \city{Shanghai}
  \country{China}
}
\email{wkmiao@sei.ecnu.edu.cn}

\author{Geguang Pu}
\affiliation{%
  \institution{East China Normal University}
  \city{Shanghai}
  \country{China}
}
\email{ggpu@sei.ecnu.edu.cn}



\newcommand{\myi}{(\emph{i})\xspace}
\newcommand{\myii}{(\emph{ii})\xspace}
\newcommand{\myiii}{(\emph{iii})\xspace}
\newcommand{\myiv}{(\emph{iv})\xspace}
\newcommand{\myv}{(\emph{v})\xspace}
\newcommand{\myvi}{(\emph{vi})\xspace}
\newcommand{\myvii}{(\emph{vii})\xspace}
\newcommand{\myviii}{(\emph{viii})\xspace}

\newcommand{\A}{\mathcal{A}} \newcommand{\B}{\mathcal{B}}
\renewcommand{\C}{\mathcal{C}} \newcommand{\D}{\mathcal{D}}
\newcommand{\E}{\mathcal{E}} \newcommand{\F}{\mathcal{F}}
\renewcommand{\G}{\mathcal{G}} \renewcommand{\H}{\mathcal{H}}
\newcommand{\I}{\mathcal{I}} \newcommand{\J}{\mathcal{J}}
\newcommand{\K}{\mathcal{K}} \renewcommand{\L}{\mathcal{L}}
\newcommand{\M}{\mathcal{M}} \newcommand{\N}{\mathcal{N}}
\renewcommand{\O}{\mathcal{O}} \renewcommand{\P}{\mathcal{P}}
\newcommand{\Q}{\mathcal{Q}} \newcommand{\R}{\mathcal{R}}
\renewcommand{\S}{\mathcal{S}} \newcommand{\T}{\mathcal{T}}
\newcommand{\SN}{\mathcal{SN}} 
\newcommand{\SD}{\mathcal{SD}} 
\newcommand{\V}{\mathcal{V}}
\renewcommand{\U}{\mathcal{U}}
\newcommand{\W}{\mathcal{W}} \newcommand{\X}{\mathcal{X}}
\newcommand{\Y}{\mathcal{Y}} \newcommand{\Z}{\mathcal{Z}}

\newcommand{\limp}{\mathbin{\rightarrow}}
\newcommand{\incl}{\subseteq}
\newcommand{\ind}{\hspace*{.18in}}

\newcommand{\Wnext}{\raisebox{-0.27ex}{\LARGE$\bullet$}}
\newcommand{\Next}{\raisebox{-0.32ex}{\huge$\circ$}}
\newcommand{\lUntil}{\mathop{\U}}
\newcommand{\Release}{\mathop{\R}}
\newcommand{\Wuntil}{\mathop{\W}}
\newcommand{\Futrue}{\mathop{\F}}
\newcommand{\Global}{\mathop{\G}}
\newcommand{\true}{\mathit{true}}
\newcommand{\final}{\mathit{Final}}
\newcommand{\false}{\mathit{false}}
\newcommand{\ttrue}{\mathtt{true}}
\newcommand{\ffalse}{\mathtt{false}}
\newcommand{\Last}{\mathit{last}}
\newcommand{\Ended}{\mathit{end}}
\newcommand{\length}{\mathit{length}}
\newcommand{\last}{\mathit{n}}
\newcommand{\nnf}{\mathit{nnf}}
\newcommand{\CL}{\mathit{CL}}
\newcommand{\MU}[2]{\mu #1.#2}
\newcommand{\NU}[2]{\nu #1.#2}
\newcommand{\BOX}[1]{ [#1]}
\newcommand{\DIAM}[1]{\langle #1 \rangle}
\newcommand{\transl}{f}
\newcommand{\Int}[2][\I]{#2^{#1}}
\newcommand{\INT}[2][\I]{(#2)^{#1}}
\newcommand{\Inta}[2][\rho]{#2_{#1}^\I}
\newcommand{\INTA}[2][\rho]{(#2)_{#1}^\I}

\newcommand{\LTL}{{\sf LTL}\xspace}
\newcommand{\ltl}{{\sf LTL}\xspace}
\newcommand{\NFA}{{\sf NFA}\xspace}
\newcommand{\SSNFA}{{\sf SS-NFA}\xspace}
\newcommand{\SSDFA}{{\sf SS-DFA}\xspace}
\newcommand{\DFA}{{\sf DFA}\xspace}
\newcommand{\GBA}{{\sf GBA}\xspace}
\newcommand{\DGBA}{{\sf DGBA}\xspace}
\newcommand{\TGBA}{{\sf TGBA}\xspace}

\newcommand{\satf}{\ensuremath{\textsc{sat}_f}}
\newcommand{\sati}{\ensuremath{\textsc{sat}_i}}

\newcommand{\NNF}{{\sc nnf}\xspace}
\newcommand{\BNF}{{\sc bnf}\xspace}

\newcommand{\buchi}{B\"uchi\xspace}
\newcommand{\Nat}{{\rm I\kern-.23em N}}
\newcommand{\Prop}{\P}
\newcommand{\Var}{\V}

\newcommand{\tup}[1]{\langle #1 \rangle}

\newcommand{\PreC}{\mathit{PreC}}
\newcommand{\Win}{\mathit{Win}}

\renewcommand{\ttrue}{\mathit{\top}}
\renewcommand{\ffalse}{\mathit{\bot}}
\newcommand{\Endt}{\mathit{end}}
\newcommand{\atomize}[1]{\texttt{"}\ensuremath{#1}\texttt{"}}

\newcommand{\ot}{o}

\newcommand{\fp}{\mathsf{fprog}\xspace}
\newcommand{\reachable}{{\mathsf{RA}}\xspace}

\newcommand{\cl}{{\mathsf{cl}}\xspace}

\newcommand{\SAT}{{\sf SAT}\xspace}
\newcommand{\BDD}{{\sf BDD}\xspace}
\newcommand{\SMT}{{\sf SMT}\xspace}

\begin{abstract}
In goal-oriented requirement engineering, boundary conditions (BC) are used to capture the divergence of goals, i.e., goals cannot be satisfied as a whole in some circumstances. As the goals are formally described by temporal logic, solving BCs automatically helps engineers judge whether there is negligence in the goal requirements. Several efforts have been devoted to computing the BCs as well as to reducing the number of redundant BCs as well. 
However, the state-of-the-art algorithms do not leverage the logic information behind the specification and are not efficient enough for use in reality. 
In addition, even though reducing redundant BCs are explored in previous work, the computed BCs may be still not valuable to engineering for debugging. 

In this paper, we start from scratch to establish the fundamental framework for the BC problem. Based on that, we first present a new approach \syntacBC  to identify BCs with the syntax  information of goals. The experimental results show that this method is able to achieve a $>$1000X speed-up than the previous state-of-the-art methodology. 
Inspired by the results, we found the defects of BCs solved by our method and also by the previous ones, i.e., most of BCs computed are difficult for debugging.  
As a result, we leverage the semantics information of goals and propose an automata-based method for solving BCs, namely \semanticBC, which can not only find the minimal scope of divergence, but also produce easy-to-understand BCs without losing performance. 

\end{abstract}

\maketitle

\section{introduction}

The requirement process is a crux stage in the whole life-cycle of software engineering, only based on which the system design, testing, and verification/validation can be completed properly in order \cite{Laplante07}. Requirements written in natural languages are often considered to be ambiguous, which can potentially affect the quality of further stages in software development. The goal-oriented requirement engineering (GORE) is a promising direction to avoid the ambiguity in the requirements, as each goal (a piece of requirement) is written in formal languages like Linear Temporal Logic (\LTL)~\cite{Pnu77}, which is rigorous in mathematics. Moreover, the requirement written in \LTL, which is normally called \emph{specification}, enables its correctness checking in an automatic way, saving a lot of artificial efforts \cite{degiovanni2018genetic}.

The conflict analysis is a common check to guarantee the correctness of the GORE specification. Since the specification is written by \LTL, such analysis can be achieved by reducing it to \LTL satisfiability checking \cite{LZPV15,LZPZV19}. If the specification formula is unsatisfiable, it indicates there is a conflict among different goals which consist of the specification. Because of the superior performance of modern \LTL satisfiability solvers \cite{LYPZH14,ES03}, the conflict analysis in GORE can be accomplished successfully. However, there are more stubborn defects that cannot be detected by satisfiability checking directly. For example, even though all goals together are satisfiable, there are some certain circumstances in which the goals cannot be satisfied as a whole any more. Such circumstances are called the \emph{boundary conditions (BC)}.

Given a set of domain properties $Dom$ and goals $G$, both of which consist of \LTL formulas, a BC $bc$ satisfies (1) the conjunction of $bc$ and all elements in $Dom$ and $G$ is unsatisfiable, and (2) after removing an element from $G$, the above conjunction becomes satisfiable, and (3) $bc$ cannot be semantically equivalent to the negation of elements in $G$. Details see below. 
Informally speaking, a BC captures the divergence of goals, i.e., the satisfaction of some goals inhibits the satisfaction of others. 

Several efforts have been devoted to computing the BCs for the given goals and domain properties. The most recent one is based on a genetic algorithm in which a chromosome represents an \LTL formula $\varphi$ and each gene of the chromosome
characterises a sub-formula of $\varphi$ \cite{degiovanni2018genetic}. 
After setting the initial population and fitness function, the algorithm utilizes two genetic operators, i.e., \emph{crossover} and \emph{mutation}, to compute the BC with respect to the input goals and domains. 
Since the genetic algorithm is a heuristic search strategy, using this approach to compute BCs is not complete. Also, the BCs computed by such algorithm seems to be redundant, i.e., it is often the case that two generated BCs $\varphi_1,\varphi_2$ satisfy $\varphi_1\Rightarrow\varphi_2$, which indicates that the BC $\varphi_2$ is redundant. 

To solve this problem, Luo et. al. presented the concept of \emph{contrastive BCs} to reduce the number of redundant BCs generated from the above genetic approach \cite{luo2021identify}. The motivation comes from that two BCs $\varphi_1,\varphi_2$ are contrastive if $\varphi_1\land \neg\varphi_2$ is not a BC and $\varphi_2\land \neg\varphi_1$ is not a BC. Once a BC is computed, the algorithm adds the negation of it into the domain properties such that the next computed BC is contrastive to the previous ones. The literature shows that, computing only the contrastive BCs saves many efforts of engineers to locate the defects revealed by the BCs in the specification. 

From our preliminary results, the efficiency of the genetic approach to compute BCs seems modest and has to be improved to meet the requirement from the industry. Also, the computed BCs from this approach may not be quite useful for debugging. Let $G=\{\Box (h\to \Next p), \Box (m\to \Next \neg p)\}$ be the set of goals and assume the set of domain properties is empty. It is not hard to see that $h\wedge m$ is a BC, which means once the pre-conditions $h$ and $m$ are true together, the two goals cannot be met as a whole. But boundary conditions are not unique, and the \LTL formulas $\psi_1=(h\wedge \Next\neg p) \vee \Diamond (m\wedge \Next p)$ and  $\psi_2=(m\wedge \Next p) \vee \Diamond (h\wedge\Next\neg p)$ are also BCs. Obviously, the BC $h\wedge m$ is more helpful for engineers than the other two, as they do not explicitly show the events causing the divergence. 

The observation is that, $\psi_1$ and $\psi_2$ consist of disjunctive operators ($\vee$), each element of which only captures a single circumstance that falsifies one goal. We argue that a meaningful and valuable BC shall not consist of the disjunctive operator. 
However, current approaches seems to compute mostly the BCs less helpful, i.e., with the form of $\bigvee \psi$. Even the contrastive BCs computed in \cite{luo2021identify} cannot avoid such problem. Take the above example again, $\psi_1$ and $\psi_2$ are contrastive BCs, but they are less helpful than $h\land m$. Therefore, the better algorithms to generate more helpful BCs are still in demand. 

In this paper, we start from scratch to establish a fundamental framework to compute the BCs in GORE. We first present a general but efficient algorithm \syntacBC to compute the BC based on the syntax information of goals. We prove in a rigorous way that by simply negating the goals and domains, i.e., $\neg (Dom\wedge  G)$, and replacing some $g_i\in G$ with $g_i'$ in the negated formula where $g_i'\Rightarrow g_i$, the constructed formula is a BC. Such algorithm performs much better than all other existing approaches, even though it generates the BC with disjunctive operators, which is less helpful.

To compute different kinds of BC that are not in the form of $\bigvee \psi$, we leverage the semantics of goals and present an automata-based approach in which the BCs can be extracted from the accepting languages of an automata. The motivation comes from the well-known fact that there is a (\buchi) automaton for every \LTL formula such that they accept the same languages \cite{GPVW95}. Since the inputs and output of the BC problem are all LTL formulas, it is straightforward to consider this problem by reducing it to the automata construction problem. Briefly speaking, we first reduce the BC computation for a set of goals whose length is greater than 2 to that for a set of goals whose length is exactly 2, making the problem easier to handle. Secondly, given a goal set $G=\{g_1,g_2\}$ and a domain set $Dom$, we  construct the automata for the formulas $\phi_1=Dom\wedge g_1\wedge \neg g_2$ and $\phi_2=Dom\wedge \neg g_1\wedge g_2$. 
Then we define the \emph{synthesis production} ($\wedge^*$) on the automata for the purpose to compute the BCs. We prove that under the synthesis production, the produced automaton of $\A_{\phi}=\A_{\phi_1}{\bigwedge^{\ast}}\A_{\phi_2}$ includes necessary information to extract BCs.

In summary, the contribution of this paper are listed as follows:
\begin{itemize}
    \item We present \syntacBC and \semanticBC, the two new approaches to identify boundary conditions based on the syntax and semantics information of goals (\LTL formulas);
    \item The experimental evaluation shows that \syntacBC is able to achieve a $>$1000X speed-up than the previous state-of-the-art methodology. Moreover, \semanticBC is able to provide more helpful BCs than previous work;
    \item The solutions shown in this paper provide a promising direction to reconsider the problem of identifying boundary conditions from the theoretic foundations.
\end{itemize}

We continue in the next section  with preliminaries. A motivating example is presented in Section \ref{sec:motivating}. Section \ref{sec:trivial}
presents the \syntacBC approach and Section \ref{sec:semantics-bc} presents the \semanticBC approach. Finally, Section \ref{sec:discuss} and \ref{sec:conclude} discuss and conclude the paper. 

\section{Preliminaries}\label{sec:preliminaries}
\subsection{\LTL and \buchi Automata}
Linear Temporal Logic (\LTL) is widely used to describe the discrete behaviors of a system over infinite trace. Given a set of atomic propositions $AP$, the syntax of \LTL formulas is defined by:
\begin{align*}
\phi\ ::=\ \ttrue \mid a \mid \neg\phi \mid \phi\wedge\phi \mid \Next\phi \mid \phi\lUntil\phi
\end{align*}
where $\ttrue$ represents the $\true$ formula, $a \in AP$ is an atomic proposition, $\neg$ is the \emph{negation}, $\wedge$ is the \emph{and}, $\Next$ is the \emph{Next} and $\lUntil$ is the \emph{Until} operator. We also have the corresponding dual operators $\ffalse$~($\false$) for $\ttrue$, $\vee$~(\emph{or}) for $\wedge$, and $\Release$~(\emph{Release}) for $\lUntil$. Moreover, we use the notation $\phi_1\to\phi_2$ (imply), $\Box\phi$ (Global), and $\Diamond\phi$ (Future) to represent $(\neg\phi_1)\vee\phi_2$, $\ffalse \Release\phi$, and $\ttrue \lUntil\phi$, respectively.
A literal $l$ is an atom $a$ or its negation $\neg a$. We use $\alpha$ to denote propositional formulas, and $\phi, \psi$ for \LTL formulas.

\LTL formulas are interpreted over \emph{infinite traces} of propositional interpretations of $AP$. A \emph{model} of a formula $\phi$ is an infinite trace $\rho\in(2^{AP})^\omega$ (i.e., $\rho: \mathbb{N}\to 2^{AP}$). Given an infinite trace $\rho$, $\rho[i]$~($\in2^{AP}$) denotes the propositional interpretation at position $i$; $\rho^i$ is the prefix ending at position $i$; and $\rho_i$ is the suffix starting at position $i$.



Given an \LTL formula $\phi$ and an infinite trace $\rho$, we inductively define the \emph{satisfaction} relation $\rho\models\phi$ (i.e., $\rho$ models $\phi$) as follows:
\begin{itemize}
    \item $\rho\models\ttrue$;
    \item $\rho\models a$ iff $p\in\rho[0]$;
    \item $\rho\models \neg\phi$ iff $\rho\not\models\phi$;
    \item $\rho\models (\phi_1\wedge\phi_2)$ iff $\rho\models\phi_1$ and $\rho\models\phi_2$;
    \item $\rho\models\Next\phi$ iff $\rho_1\models\phi$;
    \item $\rho\models\phi_1\lUntil\phi_2$ iff there exists $j\geq0$, $\rho_j\models\phi_2$ and for all $0\leq i <j$, $\rho_i\models\phi_1$.
\end{itemize}

The set of infinite traces that satisfy \LTL formula $\phi$ is the \emph{language} of $\phi$, denoted as $\L(\phi)=\{\rho\in(2^{AP})^\omega\mid \rho\models\phi\}$. The two \LTL formulas $\phi_1$ and $\phi_2$ are semantically equivalent, denoted as $\phi_1\equiv\phi_2$, iff the languages of them are the same, i.e., $\L(\phi_1)=\L(\phi_2)$. Given two formulas $\phi$ and $\psi$, we denote with $\phi\Rightarrow\psi$ if all the models of $\varphi$ are also models of $\psi$.

A \buchi automaton is a tuple $\A = (2^{AP}, Q , \Delta, q_{0}, F)$ where
\begin{itemize}
    \item $2^{AP}$ is the alphabet;
    \item $Q$ is the set of states;
    \item $\Delta: Q\times 2^{AP} \to 2^Q$ is the transition function;
    \item $q_0\in Q$ is the initial state;
    \item $F\subseteq Q$ is the set of accepting conditions. 
\end{itemize}

A \emph{run} $r$ of $\A$ on an infinite trace $\rho$ is an infinite sequence, $r=q_0,q_1,\dots,q_i,\dots$, such that $q_0$ is the initial state of $\A$ and $q_{i+1}\in\Delta(q_i,\rho[i])$ holds for $i\geq 0$.
Moreover, a run $r$ is \emph{accepting} iff there exists an accepting state $f\in F$ such that $f$ appears infinitely often. An infinite trace $\rho$ is accepted by $\A$ iff there exists an accepting run on $\rho$. The set of infinite traces that accepted by $\A$ is the language of $\A$, denoted as $\L(\A)$. 

The following theorem states the well-known relationship between \LTL formulas and the \buchi automata.

\begin{theorem}[\cite{GPVW95}]
Given an \LTL formula $\phi$, there exists a \buchi automaton $\A_\phi$ such that $\L(\phi)=\L(\A_\phi)$.
\end{theorem}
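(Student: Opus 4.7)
The plan is to follow the classical tableau-style construction of Gerth, Peled, Vardi, and Wolper (the very reference cited in the theorem). The overall idea is to let states of the automaton record the set of obligations that a suffix of an input trace still has to meet, where an obligation is a subformula of $\phi$. First I would push negations inward and work with $\phi$ in negation normal form, and then define the \emph{closure} $\cl(\phi)$ as the smallest set containing $\phi$, closed under subformulas, and under one-step temporal unfoldings such as $\psi_1\lUntil\psi_2 \equiv \psi_2 \vee (\psi_1 \wedge \Next(\psi_1\lUntil\psi_2))$. The size of $\cl(\phi)$ is linear in $|\phi|$, which is what gives the standard $2^{O(|\phi|)}$ bound on $|Q|$.

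Next I would define the states. A state is an \emph{elementary} (Hintikka) subset $q\subseteq \cl(\phi)$: it is propositionally consistent, closed under the Boolean operators inside $\cl(\phi)$, and respects the expansion law for $\lUntil$ (and dually for $\Release$). The initial state is any elementary set containing $\phi$; to stay within the transition format $\Delta: Q\times 2^{AP}\to 2^Q$, I would add a single fresh initial state $q_0$ that non-deterministically branches to those sets. The alphabet letter $\sigma\in 2^{AP}$ labels the transition from $q$ to $q'$ whenever $\sigma$ agrees with the literals in $q$ (i.e.\ $a\in q\Rightarrow a\in\sigma$ and $\neg a\in q\Rightarrow a\notin\sigma$) and the $\Next$-obligations of $q$ coincide with the non-$\Next$ content of $q'$, namely $\{\psi : \Next\psi\in q\} = q'\cap \cl(\phi)\setminus\{\Next\text{-formulas}\}$ modulo the elementarity closure.

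Acceptance is the step that matters for $\lUntil$. A naive Büchi condition is not enough, because the expansion law alone allows a trace to postpone $\psi_2$ forever while repeatedly choosing the ``$\psi_1\wedge\Next(\psi_1\lUntil\psi_2)$'' branch. For each until subformula $\psi_1\lUntil\psi_2\in\cl(\phi)$ I therefore take the accepting set $F_{\psi_1\lUntil\psi_2} = \{q : \psi_1\lUntil\psi_2\notin q \text{ or } \psi_2\in q\}$, yielding a generalized Büchi automaton whose acceptance requires each $F_{\psi_1\lUntil\psi_2}$ to be visited infinitely often. A standard degeneralization (product with a counter over the $F_i$'s) then converts this into an ordinary Büchi automaton $\A_\phi$ of the shape demanded by the theorem.

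Finally I would prove $\L(\phi)=\L(\A_\phi)$ by two directions. For soundness, from an accepting run $r=q_0 q_1 \ldots$ on $\rho$ I would show by structural induction on $\psi\in\cl(\phi)$ that $\psi\in q_i \Rightarrow \rho_i\models\psi$; the only nontrivial case is $\psi=\psi_1\lUntil\psi_2$, where the infinitely-often visits to $F_\psi$ guarantee some $j\geq i$ with $\psi_2\in q_j$, so the expansion law forces $\psi_1\in q_k$ for all $i\leq k<j$. For completeness, from $\rho\models\phi$ I would read off a canonical run by taking $q_i = \{\psi\in\cl(\phi): \rho_i\models\psi\}$, verifying that each $q_i$ is elementary, that consecutive $q_i,q_{i+1}$ are connected by $\rho[i]$, and that each $F_{\psi_1\lUntil\psi_2}$ is hit infinitely often (which follows from the semantics of $\lUntil$ applied at every position where the formula is ``pending''). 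The main obstacle throughout is exactly this treatment of until-eventualities: everything else is bookkeeping on closures and Boolean consistency, whereas the fairness constraint and its degeneralization are what make the two languages coincide.
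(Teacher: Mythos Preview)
Your construction is the standard Gerth--Peled--Vardi--Wolper tableau approach and is correct. Note, however, that the paper does not give its own proof of this theorem: it is stated in the preliminaries with a citation to \cite{GPVW95} and no argument, so there is nothing to compare against beyond the cited reference itself, which your proposal faithfully reproduces.
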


\subsection{Goal-Conflict Analysis}
In GORE, goals are used to describe the properties that the system has to satisfy, and the domain properties to describe the properties of the system's operating environment. In this article, we call such a set of goals and domain properties as a \emph{requirement scene}.
\begin{definition}[Requirement Scene]
A requirement scene $S$ consists of a set of domain properties $Dom = \{d_1,d_2,\ldots,d_m\}$ and a set of goals $G=\{g_1,g_2,\ldots, g_n\}$, i.e., $S = Dom \cup G$.
\end{definition}

In the process of building goals, inconsistencies may inevitably occur in the scene. For example, the \emph{conflicts} in the scene means that it is impossible to construct a system that satisfies the goals and the domain properties, which means $d_{1}\land d_{2}\land\dots\land d_{m}\land g_{1}\land g_{2}\land\dots\land g_{n}$ is unsatisfiable. In this article, we focus on a weaker but harder-to-catch inconsistency, which is called \emph{divergence}. We call a scene is \emph{divergent} if there is \emph{a boundary condition (BC)} in it.

\begin{definition}[Boundary Condition]\label{def:bc}
An \LTL formula $\phi$ is a boundary condition of scene $S$, if $\phi$ satisfies the following conditions:
\begin{enumerate}
    \item (logical inconsistency) $Dom \land G\land \phi$ is unsatisfiable,
    \item (minimality) $Dom \land G_{-i} \land \phi$ is satisfiable for $1\leq i\leq |G|$, and
    \item (non-triviality) $\phi\not \equiv \neg G$, i.e., $\phi$  is not semantically equivalent to $\neg G$,
\end{enumerate}
where $Dom=\bigwedge_{1\le i\le m} d_{i}$, $G=\bigwedge_{1\le i\le n} g_{i}$, and $G_{-i}= \bigwedge_{1\le (j\not = i)\le n} g_{j}$.
\end{definition}

Intuitively, a BC is a situation in which the goals can not be satisfied as a whole due to the potential divergence between the goals \cite{van1998managing}. That is, in the situation of BC, the goals are \emph{logical inconsistency}. The \emph{minimality} condition means that when any one of the goals is removed, the boundary condition no longer causes the inconsistency among the remaining goals. The \emph{non-triviality} means that the boundary condition should not be the negation of the goal, which simply means trivial situations. 

In this paper, \textbf{we mix-use the meanings of symbol $G$ and $Dom$}. When $G$ is supposed to be an \LTL formula, it represents the logical and result of the goals, otherwise it represents the set of goals, which is the same for $Dom$ and $G_{-i}$.

According to the \emph{identify-assess-control} methodology to resolve inconsistencies, we first identify the BCs that lead to inconsistencies, then assess and prioritize the identified BCs according to their likelihood and severity, and finally resolve the inconsistences by providing appropriate countermeasures. In order to assess the BC more effectively, the \emph{generality} metric and \emph{contrasty} metric are used to reduce the number of BCs during the identifying process.

\begin{definition}[Generality Metric~\cite{degiovanni2018genetic}]\label{d:generality}
Given two different BCs $\phi_1,\phi_2$ of the scene $S$, $\phi_1$ is more general than $\phi_2$ if $\phi_2\Rightarrow\phi_1$. 
\end{definition}
From the definition above, a more general BC can capture all the divergences that can be captured by the less general ones. For requirements engineers, it is more useful to assess the most general BCs to detect the cause of divergence, rather than assessing the less general ones that capture only partial situations.

\begin{definition}[Witness, Contrasty Metric~\cite{luo2021identify}]\label{d:contrasty}
Let $f$ be an LTL formula and $\phi$ a BC of the given scene $
S$. $f$ is a witness of $\phi$ iff $\phi \land \neg f$ is not a BC of $S$.
\end{definition}
The contrasty metric is another metric that can reduce the number of BCs. Similar to the general metric, if a BC $\phi_1$ is the witness of another one $\phi_2$, $\phi_{1}$ is considered as the ``better'' BC than $\phi_2$. Since $\phi_2 \land \neg \phi_1$ is not a BC, it means that after removing the divergence captured by $\phi_1$, $\phi_2$ is no longer a BC, i.e., $\phi_{1}$ includes the key part of $\phi_{2}$ which causes the divergence. For two BCs that are witnesses to each other, it is always better to choose the BC with the shorter formula length, as the conflict analysis would become easier. Compared to the generality metric, the contrasty metric can reduce more BCs~\cite{luo2021identify}, so that the requirement engineers can resolve the divergence more efficiently. 

\section{Motivating Example}\label{sec:motivating}
In this section, we illustrate the shortcomings of the BCs generated by the current methods through a widely used example, i.e., the goal-oriented requirement case of a simplified Mine Pump Controller (MPC)~\cite{kramer1983conic}. This actually reveals the flaws of the current definition of BC.  Therefore, we give more stringent restrictions on BC and propose our solutions.

MPC describes the behavior of a water pump controller system in a mine. MPC has two sensors, one is used to sense the water level in the mine, and the other is used to sense the presence of methane in the pump environment. The propositional variable $h$ is used to represent the situation that the water reaches a high level, $m$ is used to represent the presence of methane in the environment, and $p$ to represent that the system turns the pump on. The goals that the MPC expects to achieve are as follows:
\begin{align*}
\textbf{Goal:}&~No\ Flooding\ [g_{1}]\\
\textbf{FromalDef:}&~\Box (h\to \Next p) \\
\textbf{InfromalDef:}&~When\ the\ water\ level\ is\ high,\ the\ system\ should\\ 
&turn\ on\ the\ pump.\\
\textbf{Goal:}&~No\ Explosion\ [g_{2}]\\
\textbf{FromalDef:}&~\Box (m\to \Next \neg p) \\
\textbf{InfromalDef:}&~When\ there\ is\ methane\ in\ the\ environment,\ the\\ 
&pump\ should\ be\ turned\ off.
\end{align*}

First of all, $g_1$ and $g_2$ do not conflict with each other. They are two goals that can be met at the same time in some situations, for $g_{1}\land g_{2}$ is satisfiable. 
But when $\phi = h\land m$ occurs, the goals become logically inconsistent. In this certain state, the system detects the presence of methane in the environment while detecting the high water level. In this situation, the system cannot meet both two requirements at the same time, that is to say, $\phi \land g_{1} \land g_{2}$ is unsatisfiable. But if we ignore any one of the goals, the system can meet the remaining goal, formally, $\phi \land g_{1}$ and $\phi \land g_{2}$ are satisfiable. Therefore, $h\land m$ is a BC in the scene $S_{MPC}$.

\begin{figure}
    \centering
    \includegraphics[width=0.7\linewidth]{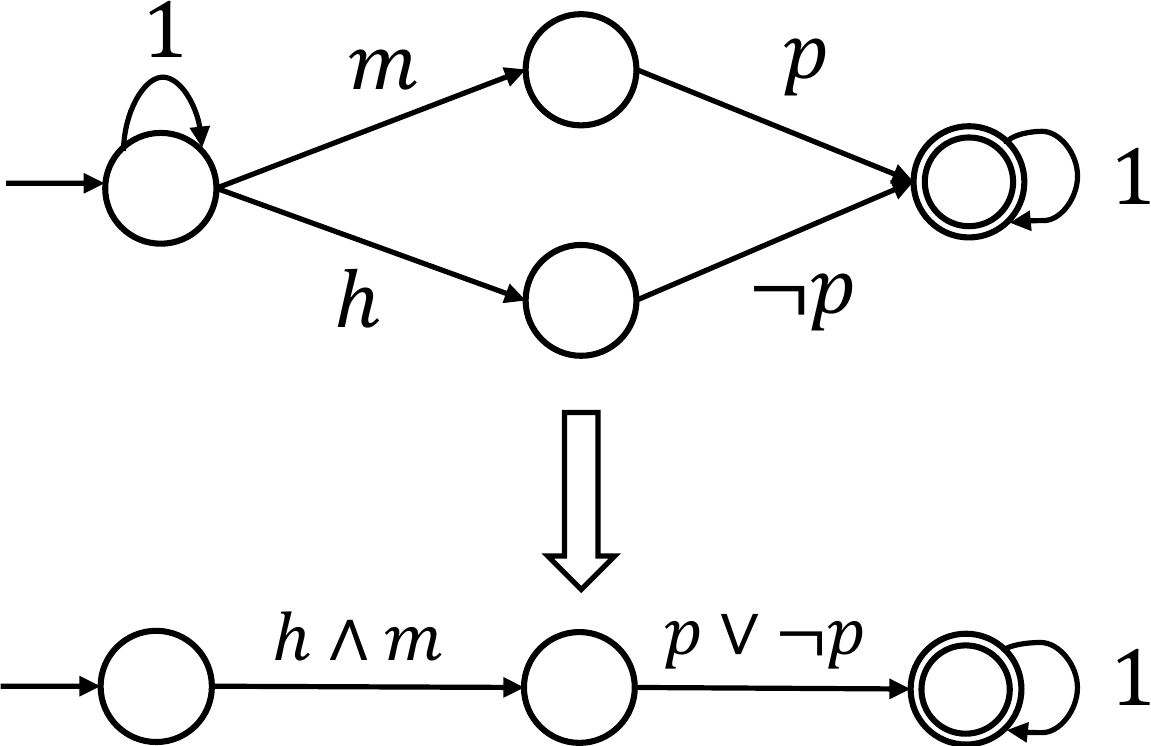}
    \caption{An illustration of the ideas on how to compute a meaningful BC for the MPC example. The automaton above is constructed from the formula $\neg (g_1\wedge g_2)$, in which each of the two (infinite) traces falsifies one of the goal ($g_1$ or $g_2$). Meanwhile, the automaton below represents the synthesis product (automaton) of the two traces in the one above.}
    \label{fig:motivate_case}
\end{figure}

In previous literature, according to the strategy for reducing the number of BCs, $\phi_{1} = \Diamond(h\land m)$ is considered to be a BC that describes a more general situation than $\phi$. But by using the \syntacBC method in this article, we can solve that $\phi_{2}=\Diamond\neg(h\to \Next p)\lor \neg(m\to \Next\neg p)$ and $\phi_{3}=\neg(h\to \Next p)\lor \Diamond\neg(m\to \Next\neg p)$ are two BCs at an extremely fast speed. Moreover, according to the reduction strategy (whether the \textit{generality metric} or the \textit{contrasty metric}), $\phi_{2}$ and $\phi_{3}$ are considered to be the more worth-keeping BCs than $\phi_{1}$, for they describe the even more general divergence situations. And our BCs with this strong generality also have the theoretical basis, indicating that it is difficult to find a BC that is more general than them.

By observing $\phi_{2}$ and $\phi_{3}$, we can find that they are obtained from making some mutations on $\neg (g_{1} \land g_{2}) $. Although they meet the \emph{non-triviality} condition of BC, it is difficult to explain why $\neg (g_{1} \land g_{2}) $ is considered trivial but they are not. And because of this suspiciously trivial nature, it is actually difficult to understand $\phi_{2}$ and $\phi_{3}$, that is, what event causes the divergence indeed. 

Here we take $\neg (g_{1} \land g_{2}) $ as an example and try to analyze it as a BC, it can be transformed to the automaton above in Figure~\ref{fig:motivate_case}. Intuitively, a word accepted by the automaton should be a sequence of events leading to divergence of goals, however it is not. For example, the two traces $\rho_{1} =m,p,(1)^{\omega}$ and $\rho_{2}=h,\neg p,(1)^{\omega}$ accepted by the automaton only violate one goal respectively\footnote{$(1)^{\omega}$ represents the trace with infinitely many $1$ ($\true$).}. The ideal BC we really want, like $h\land m$, is implicitly present in these traces. This leads to the fact that computing such a BC does not help the engineers. 

After theoretically studying the conditions of BC, We believe that the \emph{non-triviality} condition is not enough to ensure that the solved BCs have a clear meaning. We therefore define a new form of BC which is transformed from a run of an automaton. Intuitively, our approach is to fuse together two conflicting traces. For example, for the two traces $\rho_{1}$ and $\rho_{2}$, we fuse them into the trace $\rho = h\land m, p\lor \neg p,(1)^{\omega}$, which is called \emph{synthesis product}, and  generate the below automaton in Figure ~\ref{fig:motivate_case}. For trace $\rho$, we can clearly observe that when event $h\land m$ occurs, the goals will diverge for variable $p$.

\section{Identifying Boundary Conditions based on \LTL Syntax}\label{sec:trivial}
In this section, we present an efficient method to compute boundary conditions that leverages \LTL syntax. Starting from the definition of the problem, we discuss the conditions that need to be met for the existence of BC, and give the algorithm to solve BC if it exists.

\subsection{Conditions for the Existence of BC}
\begin{property}\label{p:1}
If $\phi$ is a boundary condition, then $\phi\Rightarrow\neg(Dom \land G)$.
\end{property}
It can be directly derived from the logical inconsistency condition. Since $Dom \land G \land \phi$ is unsatisfiable, so $\phi$ implies $\neg(Dom \land G)$.
Property~\ref{p:1} means that if an \LTL formula $f$ does not imply $\neg(Dom\land G)$, $f$ can not be a BC. From the perspective of requirements, the divergence captured by BC must be against the requirement $Dom\land G$. And if we consider the \emph{generality} metric for BCs reduction, $\neg(Dom\land G)$ shall be the most general BC if it is. So it is reasonable to take $\neg(Dom\land G)$ as a potential BC, and see if it can satisfy the conditions.

For the convenience of description, \textbf{we call $\neg(Dom\land G)$ \emph{NGD} (the Negation of Domain properties and Goals) below.} And to prove whether NGD is a BC, we need to introduce the definitions of \emph{extra goal} and \emph{influential domain properties} as below.

\begin{definition}[Extra Goal]\label{d:extra_goal}
For a goal $g_{i}$ in the scene $S$, $g_{i}$ is an extra goal if $(Dom\land G_{-i})\Rightarrow g_{i}$.
\end{definition}
In Definition~\ref{d:extra_goal}, $Dom\land G_{-i}$ implies $ g_{i}$ is equivalent to that $Dom \land G_{-i} \land \neg g_{i}$ is unsatisfiable. The product automata of all domain properties and goals except $g_{i}$, i.e., $\A(Dom\wedge G_{-i})$, contains the expected event traces of scene $S$, and the unsatisfiability means there is no trace against $g_{i}$. Intuitively, the goals in the scene have already deduced the goal $g_{i}$ before including $g_i$. Therefore, adding and removing $g_{i}$ will not affect the goals' expectation of the system. But we may still take it as a goal for the purpose of prompting the requirements engineer. So we call $g_{i}$ an \emph{extra goal}.

\begin{definition}[Influential Domain Properties]
A scene $S$ has influential domain properties if the domain properties set $Dom$ is not empty and $G \Rightarrow Dom$ does not hold.
\end{definition}
In contrary to Definition~\ref{d:extra_goal}, the \emph{influential domain properties} could further restrict the expected system behaviors and have influence on the system besides goals. So the automaton composed of $G$ should not already conform to the domain properties.

\begin{theorem}\label{l:no}
If the scene $S$ has extra goals, then $S$ has no BC.
\end{theorem}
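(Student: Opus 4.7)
The plan is to prove this by contradiction, using the fact that an extra goal collapses the minimality and logical-inconsistency conditions into a direct contradiction. Specifically, suppose $g_i$ is an extra goal, so $(Dom \wedge G_{-i}) \Rightarrow g_i$ by Definition~\ref{d:extra_goal}. I would immediately translate this into a semantic equivalence: since every model of $Dom \wedge G_{-i}$ is already a model of $g_i$, we have $Dom \wedge G_{-i} \equiv Dom \wedge G_{-i} \wedge g_i \equiv Dom \wedge G$.

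Next, I would assume for contradiction that some \LTL formula $\phi$ is a BC of $S$. By the logical-inconsistency clause of Definition~\ref{def:bc}, $Dom \wedge G \wedge \phi$ is unsatisfiable. By the minimality clause applied to the index $i$ of the extra goal, $Dom \wedge G_{-i} \wedge \phi$ is satisfiable. Substituting the equivalence from the previous paragraph into the minimality statement yields that $Dom \wedge G \wedge \phi$ is satisfiable, which directly contradicts the logical-inconsistency clause. Hence no BC of $S$ can exist.

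The argument is short and the only step that needs care is justifying the replacement of $Dom \wedge G_{-i}$ by $Dom \wedge G$ inside a satisfiability statement; this is routine once the equivalence is established, because satisfiability is preserved under semantic equivalence of the conjuncts. I do not expect any real obstacle here: the theorem is essentially unfolding the definitions, and the non-triviality clause of BC plays no role, so no additional case analysis on $\phi$ is needed.
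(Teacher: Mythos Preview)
Your proposal is correct and matches the paper's own proof essentially step for step: assume an extra goal $g_i$, use $(Dom\wedge G_{-i})\Rightarrow g_i$ to conclude $Dom\wedge G_{-i}\equiv Dom\wedge G$, and then derive a contradiction between the minimality clause at index $i$ and the logical-inconsistency clause. Your justification of the equivalence and the substitution into the satisfiability statement is, if anything, slightly more explicit than the paper's.
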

\begin{proof}
Assume $g_k$ is the extra goal in $S$. If there is a BC $bc$ of $S$, according to the minimality condition, it holds that $Dom\wedge G_{-i}\wedge bc$ is satisfiable for every $1\leq i\leq |G|$. Now let $i = k$, because $(Dom\wedge G_{-k})\Rightarrow g_k$ is true, we have that $Dom\wedge G_{-k} = Dom\wedge G$. Therefore, $Dom\wedge G_{-k}\wedge bc = Dom \wedge G\wedge bc$, which is unsatisfiable: the contradiction occurs. As a result, there is no BC for $S$ if $S$ has extra goals.
\end{proof}

Theorem~\ref{l:no} states a prerequisite for the existence of BC: there is no extra goal in $S$. The existence of extra goals makes the logical inconsistency and minimality conditions unable to be met at the same time. After providing this conditions for the existence of BC, we propose the first way to identify the BC in this article.

\begin{theorem}\label{t:ngd}
If the scene $S$ with influential domain properties has no extra goal, then $NGD$ is a BC of $S$.
\end{theorem}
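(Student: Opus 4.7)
The plan is to verify the three conditions of Definition~\ref{def:bc} directly for $\phi := NGD = \neg(Dom \wedge G)$, drawing on the two hypotheses (no extra goal; influential domain properties) exactly once each.

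First, I would dispatch the \emph{logical inconsistency} condition, which is essentially free: the formula $Dom \wedge G \wedge \neg(Dom \wedge G)$ is a propositional contradiction and hence unsatisfiable on every trace.

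Next, for \emph{minimality}, I would use the ``no extra goal'' hypothesis. Fix $i \in \{1,\dots,|G|\}$. By Definition~\ref{d:extra_goal}, $(Dom \wedge G_{-i}) \not\Rightarrow g_i$, so there is a trace $\rho$ with $\rho \models Dom \wedge G_{-i}$ and $\rho \not\models g_i$. Then $\rho \not\models G$, which gives $\rho \models \neg(Dom \wedge G) = NGD$. Combined with $\rho \models Dom \wedge G_{-i}$, this witnesses satisfiability of $Dom \wedge G_{-i} \wedge NGD$. Since $i$ was arbitrary, minimality holds.

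Finally, for \emph{non-triviality}, I would invoke the ``influential domain properties'' hypothesis. Since $G \Rightarrow Dom$ fails, there exists a trace $\sigma$ with $\sigma \models G$ and $\sigma \not\models Dom$. On one hand, $\sigma \not\models \neg G$; on the other, $\sigma \not\models Dom$ gives $\sigma \not\models Dom \wedge G$, hence $\sigma \models NGD$. So $NGD$ and $\neg G$ differ on $\sigma$, yielding $NGD \not\equiv \neg G$.

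I do not expect a real obstacle: each condition follows by exhibiting one witness trace, and the two hypotheses line up one-for-one with the two non-trivial conditions (minimality and non-triviality). The only subtlety worth flagging in the write-up is that the minimality argument must be done uniformly for every $i$, so the phrasing should make clear that the witness $\rho$ depends on the index $i$ being removed.
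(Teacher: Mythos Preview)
Your proposal is correct and follows essentially the same approach as the paper: verify the three BC conditions directly, using the ``no extra goal'' hypothesis for minimality and the ``influential domain properties'' hypothesis for non-triviality. The only stylistic difference is that the paper argues algebraically (expanding $Dom\wedge G_{-i}\wedge NGD$ into DNF to obtain $Dom\wedge G_{-i}\wedge\neg g_i$, and deriving a contradiction from $NGD\equiv\neg G$), whereas you exhibit explicit witness traces; the underlying logic is identical.
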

\begin{proof}
To prove this theorem, we check whether NGD can satisfy all three conditions. 
\begin{itemize}
    \item \emph{logical inconsistency:} NGD satisfy Property~\ref{p:1}, so it meets the condition.
    \item \emph{minimality:} If the formula $\phi=Dom \land G_{-i} \land NGD$ is satisfiable for each $i$, NGD can be the boundary condition. After expanding into the disjunctive normal form (DNF), $\phi$ is equivalent to ${\textstyle \bigvee_{1\le x\le m}}(Dom\land G_{-i}\land \neg d_{x}) \lor {\textstyle \bigvee_{1\le y\le n}}(Dom\land G_{-i}\land \neg g_{y})$, which is equivalent to ${\textstyle \bigvee_{1\le x\le m}}(Dom\land G_{-i}\land \neg d_{x}) \lor {\textstyle \bigvee_{1\le y\le n}}(Dom\land G_{-i}\land \neg g_{y}) \lor (Dom \land G_{-i} \land \neg g_{i})$. Also, ${\textstyle \bigvee_{1\le x\le m}}(Dom\land G_{-i}\land \neg d_{x}) \lor {\textstyle \bigvee_{1\le y\le n}}(Dom\land G_{-i}\land \neg g_{y})$ equals to $\false$, because $d_x\in Dom$ and $g_y \in G_{-i}$ must be true. So
    $\phi = Dom \land G_{-i} \land \neg g_{i}$ is true. Finally, since $S$ has no extra goal, there is no $g_i$ such that $Dom\wedge G_{-i}\Rightarrow g_i$, so $\phi = Dom\wedge G_{-i}\wedge \neg g_i$ is satisfiable. 
    
    \item \emph{non-triviality:} From the definition, $NGD = \neg (Dom\wedge G) = \neg Dom \vee \neg G$. Then $NGD\equiv G$ is true implies that $(\neg Dom\vee \neg G)\Rightarrow \neg G$ is true (The other direction is true straightforward.), which implies $(\neg Dom\vee \neg G)\wedge G = \neg Dom\wedge G$ is unsatisfiable.
    However, it is contract to the fact that $S$ has influential domain properties, i.e., $\neg Dom \land G$ is satisfiable as $G$ can not imply $Dom$. Therefore, $\neg G \not\equiv NGD$ is true.
\end{itemize}
\end{proof}

We have obtained the single boundary condition that can reduce any other BCs through the generality metric, which means $NGD$ catches all the possible divergences. This is also in line with the meaning of the LTL formula representing $NGD$.

Although $NGD$ does not violate the non-triviality condition, it is obviously more trivial than $\neg G$, for $\neg G \Rightarrow NGD$ is true. As a result, the divergence described by $NGD$ is too general and we should generate the BC $\phi$ to satisfy $\phi \Rightarrow \neg G$. The method to generate such BCs will be described in next subsection.

\subsection{Generation of BCs}\label{sec:gen}
In this section, we will narrow the scope of BC into $\neg G$ by the contrasty metric. Then we introduce the algorithm for generating BCs by further manipulating $\neg G$ on the syntactical level. We first show the theorem below to describe a ``better'' BC than NGD. 

\begin{theorem}\label{t:ng}
Let $\phi$ be a BC of the scene $S=(G, Dom)$ and $\neg \phi\Rightarrow g_x$ for some $1\leq x\leq |G|$. Then $\phi$ is a witness of $NGD = \neg (Dom\wedge G)$.
\end{theorem}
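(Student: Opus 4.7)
The plan is to unfold the definition of \emph{witness} (Definition~\ref{d:contrasty}): showing that $\phi$ is a witness of $NGD$ amounts to showing that the formula $\psi := NGD \wedge \neg\phi$ fails to be a BC of $S$. So I would aim to produce a witness-of-failure by demonstrating that $\psi$ violates exactly one of the three defining clauses of BC, namely the minimality condition, at the particular index $x$ supplied by the hypothesis $\neg\phi \Rightarrow g_x$.

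First I would rewrite the candidate $\psi$ conjoined with the ``remove $g_x$'' restriction, obtaining
\[
Dom \wedge G_{-x} \wedge NGD \wedge \neg\phi \;=\; Dom \wedge G_{-x} \wedge \neg(Dom \wedge G) \wedge \neg\phi.
\]
The central algebraic step is to combine the factor $G_{-x}$ with the hypothesis $\neg\phi \Rightarrow g_x$: every model of $G_{-x} \wedge \neg\phi$ satisfies all goals in $G_{-x}$ and also $g_x$, hence satisfies the full conjunction $G$. Consequently any model of the displayed formula must simultaneously satisfy $Dom \wedge G$ and its negation $\neg(Dom \wedge G)$, which is impossible. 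This shows $Dom \wedge G_{-x} \wedge \psi$ is unsatisfiable, which is exactly the failure of the minimality clause of Definition~\ref{def:bc} at $i = x$. By Definition~\ref{d:contrasty}, $\phi$ is therefore a witness of $NGD$.

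I do not expect a serious obstacle here: the argument is essentially one line of propositional reasoning once the hypothesis $\neg\phi \Rightarrow g_x$ is used to collapse $G_{-x} \wedge \neg\phi$ into something implying $G$. The only subtlety worth flagging in the write-up is that we do \emph{not} need to check the other BC clauses for $\psi$; violating the minimality clause at even a single index $i$ already suffices to disqualify $\psi$ as a BC, and this is all the witness definition demands. I would also briefly note why the hypothesis $\phi$ being a BC is not actually needed for this particular direction — the implication $\neg\phi \Rightarrow g_x$ alone does the work — which is a nice sanity check on the proof and clarifies where Theorem~\ref{t:ng} gets its strength.
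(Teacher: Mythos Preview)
Your proposal is correct and follows essentially the same route as the paper: both show that $\psi = NGD \wedge \neg\phi$ fails the minimality clause of Definition~\ref{def:bc} at the index $i = x$, using the hypothesis $\neg\phi \Rightarrow g_x$ to force any model of $Dom \wedge G_{-x} \wedge \psi$ to satisfy $Dom \wedge G$ and its negation simultaneously. Your additional remark that the hypothesis ``$\phi$ is a BC'' is not actually invoked in the argument is accurate and worth keeping as a clarifying note.
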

\begin{proof}
We prove the theorem by showing that $\psi = NGD\wedge \neg\phi$ is not a BC of $S$, because $\psi$ does not satisfy the \emph{minimality} condition of Definition \ref{def:bc}. In fact, 

$Dom\wedge G_{-i}\wedge \psi$\\
$=Dom\wedge G_{-i}\wedge (\neg (Dom\wedge G)\wedge \neg\phi)$\\
$=Dom\wedge G_{-i}\wedge \neg G\wedge\neg\phi$\\
$=Dom\wedge \neg g_i\wedge \neg \phi$\\
Since $\neg\phi\Rightarrow g_x $ for some $1\leq x\leq |G|$ is true, so $\neg\phi\wedge \neg g_x$ is unsatisfiable. Let $i=x$ and we know that $Dom\wedge G_{-x}\wedge\psi$ is unsatisfiable, which violates the minimality condition of Definition \ref{def:bc}. 
\end{proof}

Theorem \ref{t:ng} indicates that, for any BC $\phi$ satisfying $\phi\Rightarrow g_x$ for some $1\leq x\leq |G|$, $\phi$ is a better BC than $NGD$, according to Definition \ref{d:contrasty}. Such observation inspires us to construct such $\phi$ as the target BC.  
In order to achieve that, we introduce at first the concept of \emph{special case} as follows.

\begin{definition}[Special Case]\label{d:sc}
Let $\phi$ and $\psi$ be \LTL formulas. $\phi$ is a special case of $\psi$, if $(\phi \Rightarrow \psi)$ and $(\psi \not \Rightarrow \phi)$.
\end{definition}
Definition~\ref{d:sc} can be understood more clearly from the perspective of automata. For $\phi$ being a special case of $\psi$, we expect $A_{\psi}$ is a part of $\A_{\phi}$, i.e., traces accepted by $\A_{\psi}$ should be accepted by $A_{\phi}$ as well. 
Note that the negation of the set of goals, i.e., $\neg G$ can be written as a disjunctive form of $\neg g_{1} \lor \neg g_{2} \lor \dots \lor \neg g_{n}$. Now we have the following definition.

\begin{definition}[Syntactical Substitution]\label{def:ss}
Let $G = \{g_i \mid 1\leq i\leq |G|\}$ be the set of goals. The notation $\neg G (i, \psi)$ is defined to represent the formula $\neg g_1\vee\neg g_2\vee\ldots \vee \neg g_{i-1} \vee \psi \vee \neg g_{i+1}\vee\ldots\vee \neg g_{|G|}$.  
\end{definition}

Informally speaking, $\neg G(i, \psi)$ represents the formula by substituting $\neg g_i$ with $\psi$ in the formula $\neg G$. Then we have the following theorem. 

\begin{theorem}\label{t:scgotbc}
Let $\phi = \neg G(i, \psi)$ where $\psi$ is a special case of $\neg g_i$, and $\phi$ is a BC if $\psi\wedge Dom\wedge G$ is satisfiable. 

\end{theorem}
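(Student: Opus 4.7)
The plan is to verify that $\phi=\neg G(i,\psi)$ meets each of the three clauses of Definition~\ref{def:bc}, handling them in order of increasing difficulty: logical inconsistency is a one-line syntactic observation, minimality combines the standing no-extra-goals assumption with the stated satisfiability condition, and non-triviality is where the real work lies.

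For logical inconsistency I write $\phi=\bigvee_{j\neq i}\neg g_j\ \vee\ \psi$. Each disjunct $\neg g_j$ with $j\neq i$ is refuted by the conjunct $g_j$ of $G$, and the disjunct $\psi$ is refuted by $g_i$ because the special-case hypothesis forces $\psi\Rightarrow\neg g_i$; so every disjunct of $\phi$ is inconsistent with $G$, giving $Dom\wedge G\wedge\phi\equiv\ffalse$. For minimality I need $Dom\wedge G_{-j}\wedge\phi$ satisfiable for every $j$. When $j=i$ I use the satisfiability hypothesis (which I read as $\psi\wedge Dom\wedge G_{-i}$ being satisfiable — the statement as printed appears to be a typo, since $\psi\wedge G$ is unsatisfiable by the inconsistency step already): any model satisfies $\phi$ via its $\psi$-disjunct. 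When $j\neq i$, because $S$ has no extra goals (Theorem~\ref{l:no}), the formula $Dom\wedge G_{-j}\wedge\neg g_j$ is satisfiable and its model witnesses $\phi$ through the $\neg g_j$-disjunct.

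The hard step is non-triviality, $\phi\not\equiv\neg G$. The direction $\phi\Rightarrow\neg G$ is automatic because each disjunct of $\phi$ implies the corresponding disjunct of $\neg G$; so I must exhibit a trace $\rho\models\neg G$ with $\rho\not\models\phi$. Unwinding, $\rho$ must satisfy $\bigwedge_{j\neq i}g_j\wedge\neg g_i\wedge\neg\psi$, and one would also want $\rho\models Dom$. The strict special-case hypothesis $\neg g_i\not\Rightarrow\psi$ supplies some trace satisfying $\neg g_i\wedge\neg\psi$, while the no-extra-goals assumption supplies a trace satisfying $Dom\wedge G_{-i}\wedge\neg g_i$; the main obstacle is that these two witnesses need to be merged into a single trace, and $\neg\psi$ is a global property that may not be preserved under arbitrary modification. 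I expect the argument to either strengthen the hypothesis to the form $Dom\wedge G_{-i}\wedge\neg g_i\wedge\neg\psi$ being satisfiable, or to exploit LTL-specific closure properties (for instance by choosing $\psi$ so that the alphabet symbols relevant to $\psi$ are disjoint from those in $G_{-i}$) to reconcile the two witnesses; without such a strengthening the clause is not forced by the hypotheses listed, which is the gap I would flag in a careful reading.
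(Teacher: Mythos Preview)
Your treatment of logical inconsistency and minimality matches the paper's proof essentially line for line: the paper also observes $\phi\Rightarrow\neg G$ for the first clause, and for minimality it splits on whether the dropped goal is the substituted one (invoking the stated satisfiability hypothesis) or a different one (invoking the no-extra-goals assumption). Your reading of the hypothesis as a typo for $\psi\wedge Dom\wedge G_{-i}$ is correct; the paper's own minimality argument explicitly uses that form.

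Where you diverge from the paper is in the non-triviality clause, and here your caution is well placed. The paper's argument for $\phi\not\equiv\neg G$ is a single sentence: ``$\phi$ is assured to not be semantically equal to $\neg G$, because it is just obtained by changing the semantics of $\neg G$, or more accurately, by narrowing the scope of $\neg G$.'' This is exactly the hand-wave you flagged. The gap you identify is real: strict inclusion $\psi\Rightarrow\neg g_i$, $\neg g_i\not\Rightarrow\psi$ does not by itself force $\phi\not\equiv\neg G$, because the witness to $\neg g_i\wedge\neg\psi$ may fail $G_{-i}$. Concretely, take $G=\{a,b\}$ and $\psi=\neg a\wedge b$: then $\psi$ is a special case of $\neg g_1=\neg a$, the hypothesis $\psi\wedge G_{-1}=\neg a\wedge b$ is satisfiable, yet $\phi=(\neg a\wedge b)\vee\neg b\equiv\neg a\vee\neg b=\neg G$. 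So the theorem as stated is not literally correct, and your instinct that an extra condition (such as satisfiability of $Dom\wedge G_{-i}\wedge\neg g_i\wedge\neg\psi$) is needed is the right diagnosis; the paper simply does not address it.
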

\begin{proof}
The potential BC $\phi$ is $\neg g_{1} \lor \dots \lor \psi \lor \dots \lor \neg g_{n}$ after replacing $\neg g_{x}$ by $\psi$, $(\psi \Rightarrow g_{x})$ and $(g_{x} \not \Rightarrow \psi)$. And now we prove $\phi$ can satisfy all three conditions of Definition \ref{def:bc}. 
\begin{itemize}
    \item\emph{logical inconsistency:} From Definition \ref{def:ss}, $\phi\Rightarrow \neg G$ is true. Therefore, $Dom\wedge G\wedge \phi$ is unsatisfiable.  
    \item\emph{minimality:} We now prove $Dom \land G_{-i} \land \phi$ is satisfiable for each $1\leq i\leq |G|$. 
    For the situation of $i \neq x$, ${\textstyle \bigvee_{1\le j\le n,j\neq i,j\neq x}}(Dom\land G_{-i}\land \neg g_{j}) \lor (Dom \land G_{-i} \land \neg g_{i}) \lor (Dom \land G_{-i} \land \psi)$ is satisfiable, because the pruned scenario guaranteed that $Dom \land G_{-i} \land \neg g_{i}$ is always satisfiable.
    And in the situation of $i = x$, the result of ${\textstyle \bigvee_{1\le j\le n,j\neq i}}(Dom\land G_{-i}\land \neg g_{j})\lor (Dom \land G_{-i} \land s)$ is decided by $Dom \land G_{-i} \land \psi$. So if $\psi \land Dom\land G_{-i}$ is satisfiable, the condition of minimality is satisfied.
    \item\emph{non-triviality:} $\phi$ is assured to not be semantically equal to $\neg G$, because it is just obtained by changing the semantics of $\neg G$, or more accurately, by narrowing the scope of $\neg G$.
\end{itemize}
\end{proof}
Definition \ref{def:ss} conducts our first approach to identify boundary conditions and  Theorem~\ref{t:scgotbc} shows the guarantee of correctness.  We replace one item ($\neg g_i$) in $\neg G$ with its special case $\psi$. If the special case satisfies the condition $\psi \land Dom\land G_{-i}$, then we have already generated a BC, and there is no need to check whether it satisfies the conditions. 

According to the contrasty metric, all the filtered BCs should imply $\neg G$, which allow the generated BCs firstly meet the logical-inconsistency condition. Secondly, the constructed special cases should make the minimality condition satisfied. Finally, by the substitution on the syntax of $\neg G$, we also make BC satisfy the non-triviality condition. 

Finally, we make the following statement to show the effectiveness of \syntacBC, which can be guaranteed by Theorem \ref{t:ng}, Definition \ref{def:ss} and Theorem \ref{t:scgotbc}.

\textbf{Remark. } \textit{The BCs computed from \syntacBC are ``better'' than NGD, according to Definition \ref{d:contrasty}.}

\subsection{Implementation}
In this section, we describe the detailed implementation of the BC quick generation algorithm that is based on Definition \ref{def:ss}.
\begin{algorithm}[ht]
\caption{\syntacBC: Computing BC based on Syntactical Substitution}
\label{alg:imp_nonsense}
\begin{algorithmic}[1]
\Procedure{\syntacBC}{ $Scene\ S$}
\State $BC\leftarrow \varnothing$
\If{$extraGoalin(S)$}\label{alg:l:extra}
    \State \Return $BC$
\EndIf{}
\For{each $g_{i}\in goals$}
    \State $sc=GetSpecialcaseByTemplate(\neg g_{i},S)$\label{alg:l:sc}
    \If{$SAT(sc\land Dom\land G_{-i})$}\label{alg:l:scsat}
        \State $G_{-i}\leftarrow\bigwedge_{1\le j\le n}^{j\neq i}g_{j}$
        \State $BC\leftarrow BC\cup \{sc\lor\neg G_{-i}\}$
    \EndIf{}
\EndFor
\State $BC\leftarrow ContrastyMetricReduction(BC)$
\State \Return $BC$
\EndProcedure
\end{algorithmic}
\end{algorithm}

Algorithm~\ref{alg:imp_nonsense} details the execution process of \syntacBC. It accepts a scene $S$ as the input and outputs the reduced BCs. In line~\ref{alg:l:extra}, if we find that there are extra goals in $S$, then the program can return empty directly, for no BC is in $S$.

In line~\ref{alg:l:sc}, for each $g_{i}$ in the goal set, we find the special case of $\neg g_{i}$. We use the method of formula templates to get special cases, which is implemented in the $GetSpecialcaseByTemplate$ function in the algorithm. Table~\ref{tab:sc} lists all the corresponding special cases according to the LTL syntaxs. For example, the special case of $a\lor b$ can be $a$ or $b$. However, when the semantics represented by the formula itself are very limited, other atomic propositions are needed for further reduction. For example, when the $\neg g_{i}$ formula is $a\land b$, we find an atomic proposition $p$ that exists in $G_{-i}$ but does not exist in $g_{i}$, and let the special case be $a\land b\land p$.

\begin{table}[h]
    \caption{\LTL formula and the corresponding special cases}
    \label{tab:sc}
    \centering
    \begin{tabular}{cccc}
    \hline 
    $\neg g_{i}$  & special case & $\neg g_{i}$  & special case\tabularnewline
    \hline 
    $tt$  & $p$ & $a\lor b$  & $a,b$\tabularnewline
    $a$  & $a\land p$ & $\Box a$  & $\Box a\land p$\tabularnewline
    $\neg a$  & $\neg a\land p$ & $aUb$  & $b,a\land\Next b$\tabularnewline
    $a\land b$  & $a\land b\land p$ & $aRb$  & $b\land\Next a,a$\tabularnewline
    \hline 
    \end{tabular}
\end{table}

Then we check whether the obtained special case satisfies that $sc\land Dom\land G_{-i}$ is satisfiable at line~\ref{alg:l:scsat}. For $sc$ that meets the conditions, we can directly construct one BC $sc\lor\neg G_{-i}$. Finally, we use the contrasty metric to reduce the redundant BCs.

\subsection{Results and Evaluation}\label{sec:syntac_eva}
In this section, we evaluated our \syntacBC method and compared it with the previous method.
\subsubsection{Setups}
We built our BC solver according to Algorithm~\ref{alg:imp_nonsense}. We used Spot~\cite{duret.16.atva2} as the \LTL satisfiability (\LTL-\SAT) solver. It first translates the \LTL formula to the corresponding automaton, and then checks whether the automaton is empty to determine the satisfiability of formula. All experiments were carried out on a system with AMD Ryzen 9 5900HX and 16 GB memory under Linux WSL (Ubuntu 20.04).
\subsubsection{Benchmarks}
We compare and evaluate our method on the 15 different cases introduced in~\cite{degiovanni2018genetic}. We compare our method with JFc (Joint framework to interleave filtering based on the contrast metric) proposed in~\cite{luo2021identify}, which is the state-of-art BC solver.

\subsubsection{Evaluation}

\begin{table*}[t]
    \caption{Comparison on the performance beween JFc and \syntacBC.}
    \label{tab:result}
    \centering
    \resizebox{0.7\textwidth}{!}{
\begin{tabular}{c|ccc|cccc|ccc}
\hline 
\multirow{2}{*}{case} & \multicolumn{3}{c|}{JFc} & \multicolumn{4}{c|}{\syntacBC} & \multicolumn{3}{c}{\semanticBC}\tabularnewline
 & $|BC_{c}|$  & $t(s)$  & $\#suc$.  & $|BC|$  & $t(s)$  & $|BC_{c}|$  & $c$ & $|BC_{t}|$ & $|BC_{w}|$ & $t(s)$\tabularnewline
\hline 
RP1  & 1.5  & 851.6  & 9  & 4  & 0.002  & 2  & 0.467 & 1 & 0 & 0.003\tabularnewline
RP2  & 1.5  & 823.5  & 9  & 4  & 0.003  & 1  & 0.733 & 2 & 0 & 0.006\tabularnewline
Ele  & 2.9  & 1610.4  & 10  & 4  & 0.003  & 2  & 0.81 & 2 & 0 & 0.005\tabularnewline
TCP  & 1.8  & 1510.1  & 8  & 4  & 0.004  & 2  & 1 & 5 & 0 & 0.010\tabularnewline
AAP  & 2.4  & 1875.9  & 10  & 4  & 0.006  & 1  & 0.875 & 10 & 0 & 0.023\tabularnewline
MP  & 2  & 1318.2  & 9  & 4  & 0.005  & 1  & 0.85 & 5 & 10 & 0.050\tabularnewline
ATM  & 2  & 1908.4  & 10  & 4  & 0.006  & 1  & 0.8 & 0 & 0 & 0.013\tabularnewline
RRCS  & 0.9  & 43.5  & 9  & 3  & 0.008  & 2  & 0 & 0 & 0 & 0.036\tabularnewline
Tel  & 0.2  & 223.6  & 9  & 4  & 0.007  & 2  & 1 & 6 & 0 & 0.075\tabularnewline
LAS  & 0  & 201.4  & 0  & 10  & 0.020  & 4  & 1 & 1 & 0 & 0.047\tabularnewline
PA  & 0.1  & 703.2  & 1  & 12  & 0.144  & 3  & 1 & 262 & 176 & 37.379\tabularnewline
RRA  & 1.1  & 1653.6  & 9  & 6  & 0.040  & 3  & 1 & 0 & 42 & 6.260\tabularnewline
SA  & 0.2  & 749  & 2  & 7  & 0.075  & 2  & 1 & 1 & 93 & 8.134\tabularnewline
LB  & 0.4  & 1809.8  & 4  & 0  & 0.140  & 0  & n/a & 13 & 0 & 2.128\tabularnewline
LC  & 1  & 4150.2  & 10  & 0  & 3.580  & 0  & n/a & 0 & 436 & 283.639\tabularnewline
\hline 
\end{tabular}
    }
\end{table*}

Table~\ref{tab:result} shows the comparison between our \syntacBC method and JFc on the 15 requirement cases. 

The way we compared the JFc to \syntacBC is described as follows: we run the solvers, JFc and SyntacBC, on the same case. $|BC|$ represents the number of solved $BC$ before filtering. $|BC_{c}|$ represents the number of BCs after filtering by using the contrasty metric, and time consumption for solving and filtering of two solvers is represented by $t$. For JFc, we run the learning algorithm 10 times, $\#suc.$ represents the number of learning times that BC can be successfully solved, so the $|BC_{c}|$, $t$ represents the average result of 10 times run.

Both tools output a set of BCs, we let $\{BC_j\}$ as the set solved by JFc and $\{BC_s\}$ by SyntacBC, we use the contrasty metric introduced in the previous study to further reduce the number of BCs in the set $\{BC_j, BC_s\}$. Column $c$(coverage) represents what percentage of BCs in set $\{BC_j\}$ will be computed to be redundant because of the BCs in set $\{BC_s\}$. In detail, let $bc_{j}$ and $bc_{s}$ be one BC in set $\{BC_j\}$ and $\{BC_s\}$ correspondingly, then $c$ equals to (the number of $bc_{s}$ is the witness of $bc_{j}$ but $bc_{j}$ is not that of $bc_{s}$)/(the number of $bc_{j}$).

First of all, the \syntacBC method has significant advantages in time cost. On larger scale cases, a certain time cost comes from using Spot to perform \LTL-\SAT check, but overall, \syntacBC achieves a $>$1000X speed-up than JFc.

For the last two examples LB and LC, because there are extra goals in the examples, our method judges that there is no BC in them according to Theorem~\ref{l:no}. But JFc solves the BC out incorrectly because of the bug of the \LTL-\SAT checker it uses.

Notably, \textbf{the $bc_{s}$ obtained by the \syntacBC is always the witness of the $bc_{j}$ obtained by the JFc}, which means that $bc_{s}$ can capture all the divergences captured by $bc_{j}$. And column $c$(coverage) shows that, in the cases of larger scale, it is difficult for JFc to solve $bc_{j}$, which can not be filtered by $bc_{s}$. But in all cases, $bc{s}$ cannot be filtered out.


\subsection{Discussion}
The results prove the superiority of \syntacBC in speed and the quality of the solved BCs. The advantage on time cost benefits from our theoretical-based method, which significantly reduces the number of calls to the \LTL-\SAT solver compared to the genetic algorithm-based method. The BCs solved by the \syntacBC method is difficult to be filtered because we construct BC according to the definition of BC, and ensure that it satisfies the definition to a lesser extent.

Even though \syntacBC is very efficient to solve BCs, when we want to understand what events implied by those BCs can lead to the divergence of goals, we find it very difficult to interpret them. The reason is because these BCs can be thought of as simply combining the negation of each goal with the `$\lor$' operator.

By studying the definition of BC and results, we believe that the \emph{non-triviality} condition is not enough to ensure that the solved BC has a definite meaning. After observing the BCs sovled by two methods, we found that some BCs obtained by the genetic algorithm-based method are easy to understand. So in Section~\ref{sec:semantics-bc}, by summarizing their common properties in form of formula, we define a new form of BCs and design a new method to sovle them.

\section{Identifying Boundary Conditions Based on \LTL Semantics}\label{sec:semantics-bc}
In this section, we describe how to confirm BC with definite meaning between two goals, which leverages the semantic information of \LTL formulas. The approach is named \emph{\semanticBC}. We first reduce the problem of solving BC into a set of two goals, then we define the trace formula be the form of meaningful BC and the method of Synthesis of trace formula to solve BC. Section~\ref{sec:semantic-bc:definition} presents the theoretic foundations of \semanticBC, Section~\ref{sec:autoimp} gives the detailed implementation, and Section~\ref{sec:autoeva} shows the experimental results of our method on the benchmarks.

\subsection{\semanticBC}\label{sec:semantic-bc:definition}
We first show the fact in Theorem \ref{thm:bcreduction} that the $BC$ computation for the goal set $G$, whose size is greater than 2, can be always reducible to that for the goal set $G'$ whose size is exactly 2. 

\begin{theorem}[$BC$ Reduction]\label{thm:bcreduction}
Let $G$ be a goal set with $|G|>2$ and $Dom$ be a domain set. Then $\phi$ is a BC of $G$ under $Dom$ implies that there is $G'\subseteq G$ with $|G'|=2$ and $\phi$ is the BC of $G'$ under $Dom\cup (G\backslash G')$. 
\end{theorem}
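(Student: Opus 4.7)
The plan is to prove something slightly stronger than the statement: I will show that \emph{every} pair $\{g_i, g_j\} \subseteq G$ can serve as the witness $G'$. Letting $G' = \{g_i, g_j\}$ and $Dom' = Dom \cup (G \setminus G')$, I will verify the three BC conditions of Definition~\ref{def:bc} for $\phi$ in the reduced scene $(Dom', G')$. The strategy exploits the syntactic rearrangement $Dom' \land (g_i \land g_j) = Dom \land G$, which makes most of the work mechanical.

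The first two conditions fall out almost immediately. For logical inconsistency, $Dom' \land g_i \land g_j \land \phi$ is just $Dom \land G \land \phi$, unsatisfiable by the BC assumption on the original scene. For minimality of the reduced scene, dropping $g_j$ from $G'$ yields $Dom' \land g_i \land \phi = Dom \land G_{-j} \land \phi$, which is satisfiable because $\phi$ is already a BC of $(Dom, G)$; dropping $g_i$ is symmetric.

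The main obstacle is the non-triviality condition, i.e., showing $\phi \not\equiv \neg(g_i \land g_j)$. This is where the hypothesis $|G| > 2$ is essential: it guarantees a third goal $g_k$ with $k \notin \{i, j\}$. The original minimality at index $k$ supplies a trace $\rho$ with $\rho \models Dom \land G_{-k} \land \phi$. Since $G_{-k}$ still contains both $g_i$ and $g_j$, this $\rho$ satisfies $g_i \land g_j$ and therefore fails $\neg(g_i \land g_j)$; yet $\rho \models \phi$, so the two formulas cannot be semantically equivalent.

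Beyond this single observation, I do not expect subtleties. The one place to be careful is matching notation: per Definition~\ref{def:bc}, $\neg G'$ unfolds as $\neg(g_i \land g_j) = \neg g_i \lor \neg g_j$, not as the conjunction $\neg g_i \land \neg g_j$. With that understood, the pair $G'$ can be chosen arbitrarily, making the proof constructive and free of case analysis.
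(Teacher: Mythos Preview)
Your proof is correct and takes a genuinely different route from the paper's. The paper argues by induction on $|G|$: at each step it moves one goal $g_j$ from the goal set into the domain, checks the three BC conditions for the smaller scene, and repeats until only two goals remain. For non-triviality at each inductive step, the paper observes that if $\phi$ were equivalent to $\neg G_{-j}$ then the minimality condition of the \emph{original} scene at index $j$ would fail (since $G_{-j}\wedge\neg G_{-j}$ is unsatisfiable), contradicting the inductive hypothesis.

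Your argument dispenses with the induction entirely and goes straight to an arbitrary pair $\{g_i,g_j\}$. Logical inconsistency and minimality transfer by the same rearrangement the paper uses, but your non-triviality argument is different and arguably cleaner: you exploit the existence of a third index $k$ (guaranteed by $|G|>2$) and use the original minimality at $k$ to exhibit a concrete trace satisfying $\phi$ yet also satisfying $g_i\wedge g_j$, directly separating $\phi$ from $\neg(g_i\wedge g_j)$. This buys you a slightly stronger conclusion than the theorem states---\emph{every} two-element subset of $G$ works as $G'$, not merely some subset---and it avoids the bookkeeping of an inductive descent. The paper's inductive structure, on the other hand, makes explicit that the reduction is valid at every intermediate size, which is mildly more informative if one cares about goal sets of size strictly between $2$ and $|G|$.
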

\begin{proof}
We prove by the induction over the size of $G$.
\begin{enumerate}
    \item Basically, let $k = |G|>2$ and we know $\phi$ is a BC of $G$. Therefore, we have 1) $Dom\wedge G\wedge\phi$ is unsatisfiable; and 2) $Dom \wedge G_{-i} \wedge \phi$ is satisfiable for each $1\leq i\leq n$; and 3) $\phi$ is not syntactically equivalent to $\neg G$.
    \item Inductively, for $G'\subseteq G$ with $|G'|=k-1\geq 2$, we prove that $\phi$ is a BC of $G'$ under $Dom\cup (G\backslash G')$. Assume $G\backslash G'=\{g_j\}$, i.e., $G'$ is acquired by removing $g_j$ ($1\leq j\leq |G|$) from $G$. As a result, $G' = G_{-j}$. From the assumption hypothesis, we have that 1) $(Dom\wedge g_j) \wedge G_{-j}\wedge \phi = Dom \wedge G\wedge \phi$ is unsatisfiable; and 2) $(Dom\wedge g_j)\wedge {G_{-j}}_{-i}\wedge \phi = Dom \wedge G_{-i}\wedge \phi$ is satisfiable for $1\leq i\leq |G|$ and $i\not = j$; and 3) $\phi$ is not syntactally equivalent to $\neg G_{-j}$, as it cannot be a BC of $G$ under $Dom$. From Definition \ref{def:bc}, $\phi$ is a BC of $G'$ under the domain $Dom\cup (G\backslash G')$.  
\end{enumerate}
Now we have proved that $\phi$ is a BC of $G$ with $|G|=k >2$ under $Dom$ implies $\phi$ is also the BC of $G'$ with $|G'|=k-1\geq 2$ and $G'\subseteq G$ under $Dom \cup (G\backslash G')$. 
Continue the above process until $k=2$ and we can prove the theorem. 
\end{proof}

In principle, Theorem \ref{thm:bcreduction} provides a simpler way to compute the BC of some goal set $G$, whose size is greater than 2, under a domain set $Dom$, by spliting $G$ to two sets $G_1,G_2$ such that $|G_1| = 2$ and $Dom' = Dom\cup G_2$, and then compute the BC of $G_2$ under the new domain $Dom'$.

\begin{theorem}\label{thm:bcupper}
Let $G=\{g_1,g_2\}$ be a goal set and $Dom$ be the domain, 
\begin{enumerate}
    \item $(Dom\land g_{1}\land \neg g_{2})\lor(Dom\land g_{2}\land \neg g_{1})$ is a BC of $G$ under $Dom$; and 
    \item For any other BC $\phi$ of $G$ under $Dom$, $ (Dom\land g_{1}\land \neg g_{2})\lor(Dom\land g_{2}\land \neg g_{1})$ is the witness of $\phi$ holds, i.e., $\phi\wedge \neg ((Dom\wedge g_1\wedge \neg g_2)\vee (Dom\wedge \neg g_1\wedge g_2))$ is not a BC of $G$ under $Dom$. 
\end{enumerate}
\end{theorem}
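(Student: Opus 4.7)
The plan is to verify both claims directly against Definition~\ref{def:bc} and Definition~\ref{d:contrasty}. Let me abbreviate $\psi := (Dom\wedge g_{1}\wedge \neg g_{2})\vee(Dom\wedge g_{2}\wedge \neg g_{1})$ and work throughout under the implicit assumption (inherited from Theorem~\ref{l:no}) that the scene admits at least one BC, i.e.\ neither $g_{1}$ nor $g_{2}$ is extra, so $Dom\wedge g_{1}\wedge\neg g_{2}$ and $Dom\wedge g_{2}\wedge\neg g_{1}$ are both satisfiable.

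For Part~(1), I would take the three BC conditions in turn. Logical inconsistency is immediate by distribution: $Dom\wedge g_{1}\wedge g_{2}\wedge \psi$ splits into two disjuncts, each containing a literal and its negation ($g_{2}\wedge\neg g_{2}$ or $g_{1}\wedge\neg g_{1}$), so the conjunction is unsatisfiable. For minimality with $i=1$ (so $G_{-1}=g_{2}$), conjoining $Dom\wedge g_{2}$ with $\psi$ kills the first disjunct of $\psi$ and leaves $Dom\wedge g_{2}\wedge\neg g_{1}$, which is satisfiable precisely because $g_{1}$ is not extra; the case $i=2$ is symmetric. For non-triviality, I exhibit a trace separating $\L(\neg G)$ from $\L(\psi)$: any model of $Dom\wedge\neg g_{1}\wedge\neg g_{2}$, or alternatively any model of $\neg Dom\wedge \neg g_{1}$, satisfies $\neg G = \neg g_{1}\vee\neg g_{2}$ while falsifying both disjuncts of $\psi$. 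The existence of such a trace follows from the influential-domain-properties / non-degeneracy assumptions that the excerpt has been using throughout Section~\ref{sec:trivial}.

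For Part~(2), the plan is to show that $\phi\wedge\neg\psi$ violates minimality at $i=1$. The pivotal algebraic step is to simplify $\neg\psi$ in the context $Dom\wedge g_{2}$: the disjunct $Dom\wedge g_{1}\wedge\neg g_{2}$ of $\psi$ is already falsified by $g_{2}$, and the remaining disjunct $Dom\wedge\neg g_{1}\wedge g_{2}$ collapses to $\neg g_{1}$, so $\neg\psi$ becomes $g_{1}$. Hence
\[
Dom\wedge G_{-1}\wedge(\phi\wedge\neg\psi)\;=\;Dom\wedge g_{2}\wedge g_{1}\wedge\phi\;=\;Dom\wedge G\wedge\phi,
\]
which is unsatisfiable since $\phi$ is itself a BC and therefore already satisfies the logical-inconsistency condition. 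This falsifies the minimality clause of Definition~\ref{def:bc}, so $\phi\wedge\neg\psi$ is not a BC, meaning $\psi$ is a witness of $\phi$ by Definition~\ref{d:contrasty}.

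The main obstacle I anticipate is the non-triviality check in Part~(1). Unlike logical inconsistency and minimality, which fall out from purely syntactic manipulation of conjunctions/disjunctions, showing $\psi\not\equiv\neg G$ requires producing a concrete separating trace, and in pathological scenes (e.g.\ where $g_{1}\vee g_{2}$ is valid and $Dom\equiv\ttrue$) the two formulas genuinely coincide. I would handle this by invoking the same non-degeneracy precondition used in Theorem~\ref{t:ngd}, namely that $Dom$ is influential or that $\neg g_{1}\wedge\neg g_{2}$ is satisfiable; under the standing assumption that $S$ admits some BC at all, at least one of these escape routes is always available. Part~(2) is comparatively mechanical once the substitution trick for $\neg\psi$ under $G_{-i}$ is spotted.
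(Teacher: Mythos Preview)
Your proposal is correct and follows essentially the same route as the paper: for Part~(1) you verify the three clauses of Definition~\ref{def:bc} directly, and for Part~(2) you show that $\phi\wedge\neg\psi$ fails minimality by simplifying $\neg\psi$ in the presence of $Dom\wedge G_{-i}$ down to the missing goal, reducing to $Dom\wedge G\wedge\phi$. The only notable difference is that you are more scrupulous than the paper about the non-triviality clause---the paper simply asserts $\psi\not\equiv\neg G$, whereas you correctly flag that this needs a non-degeneracy hypothesis (e.g.\ influential $Dom$ or satisfiability of $\neg g_{1}\wedge\neg g_{2}$).
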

\begin{proof}
Let $\psi = (Dom\wedge g_1\wedge\neg g_2)\vee (Dom\wedge \neg g_1\wedge g_2)$,
\begin{enumerate}
    \item Firstly, $Dom\wedge g_1\wedge g_2\wedge \psi$ is unsatisfiable; Secondly, it is trivial to check $Dom\wedge g_i\wedge \psi$ is satisfiable for $i=1,2$; Thirdly, $\psi\neg\equiv(\neg g_1\vee\neg g_2)$ is also true. As a result, $\psi$ is a BC of $G$ under $Dom$. 
    \item To prove that $\phi\wedge\neg\psi$ is not a BC, we show it does not satisfy the \emph{minimality} condition in Definition \ref{def:bc}. In fact,

        $Dom\wedge g_1\wedge \phi\wedge\neg\psi$\\
        $\equiv Dom\wedge g_1\wedge \phi\wedge (\neg Dom\vee \neg g_1\vee g_2)\wedge (\neg Dom\vee g_1\vee \neg g_2)$\\
        $\equiv Dom\wedge g_1\wedge \phi\wedge (\neg Dom\vee (\neg g_1\wedge\neg g_2)\vee (g_1\wedge g_2))$\\
        $\equiv \false$
\end{enumerate}
So from Definition \ref{def:bc}, $\phi\wedge \neg\psi$ is not a BC. The proof is done. 
\end{proof}

Theorem \ref{thm:bcupper} indicates that $\psi = (Dom\wedge g_1\wedge\neg g_2)\vee (Dom\wedge \neg g_1\wedge g_2)$ is the so-called ``good'' BC satisfying Definition \ref{d:contrasty}, since it is the witness of any other BC. However, $\psi$ is an $\vee$ formula such that each disjunctive element captures only the circumstance of one goal, which does not reflect the divergence in the system level. As a result, $\psi$ is not considered as a ``meaningful'' BC. Next, we define a form of BC which we consider is ``meaningful''. 

\begin{definition}[Trace Formula]\label{def:tf}
A trace formula $\phi = prefix(\phi)\wedge loop(\phi)$ is an \LTL formula such that 
\begin{itemize}
    \item $prefix(\phi)$ has the form of $\bigwedge_{0\leq i\leq n}\Next^{i} p_i (n\geq 0)$, where $p_i$ is a conjunctive formula of literals, $\Next^0 (p) = p$ and $\Next^{i+1} p = \Next (\Next^{i} p)$ for $i\geq 0$;
    \item $loop(\phi) = \Next^{n+1}\Box p_{n+1}$. 
\end{itemize}
\end{definition}

From the perspective of automata theory, a trace formula represents an infinite trace of an automaton, which is identified by both the \emph{prefix} and \emph{loop} parts such that the length of the prefix is finite and the loop is just the single self-loop for the trace ending. As shown in Figure \ref{fig:stf}, $\phi,\phi_1,\phi_2$ are all trace formulas. For such formulas, their behaviors are easy for users to check, and we argue that they can be ``meaningful'' BCs. 

\textbf{Remark. }\textit{A BC $\phi$ is considered to be meaningful if it is a trace formula.}

Now we introduce the operation of \emph{synthesis} on two given trace formulas, essentially a BC constructed from the two trace formulas that are not BCs. Before that, we first define $fuse (c_1, c_2) = p_1\wedge p_2$ if $c_1=p_1\wedge a$ and $c_2=p_2\wedge\neg a$, i.e., $c_1,c_2$ have only one different literal; Otherwise, $fuse(c_1, c_2)=\false$.
\begin{definition}[Synthesis of Trace Formulas]\label{def:stf}
Given two trace formulas $\phi_1=\bigwedge_{0\leq i\leq n}\Next^{i} {p_i}^1 \wedge \Next^{n+1}\Box {p_{n+1}}^1$ and $\phi_2=\bigwedge_{0\leq i\leq n}\Next^{i} {p_i}^2 \wedge \Next^{n+1}\Box {p_{n+1}}^2$, the synthesis of $\phi_1,\phi_2$ is a trace formula $\phi=\bigwedge_{0\leq i\leq n}\Next^{i} {p_i} \wedge \Next^{n+1}\Box {p_{n+1}}$ such that 
\begin{itemize}
    \item $p_j = fuse({p_j}^1, {p_j}^2)$ for some $0\leq j\leq n+1$; and 
    \item $p_i = {p_i}^1\wedge {p_i}^2$ for every $0\leq (i\not = j)\leq n+1$; and 
    \item $p_i\not\equiv\false$ for every $0\leq i\leq n+1$.
\end{itemize}
\end{definition}

As illustrated in Figure \ref{fig:stf}, $\phi$ is the systhesis of two trace formulas $\phi_1$ and $\phi_2$. Below shows the insight why we need the synthesis of two trace formulas for BC computation. 

\begin{figure}
    \centering
    \includegraphics[width=\linewidth]{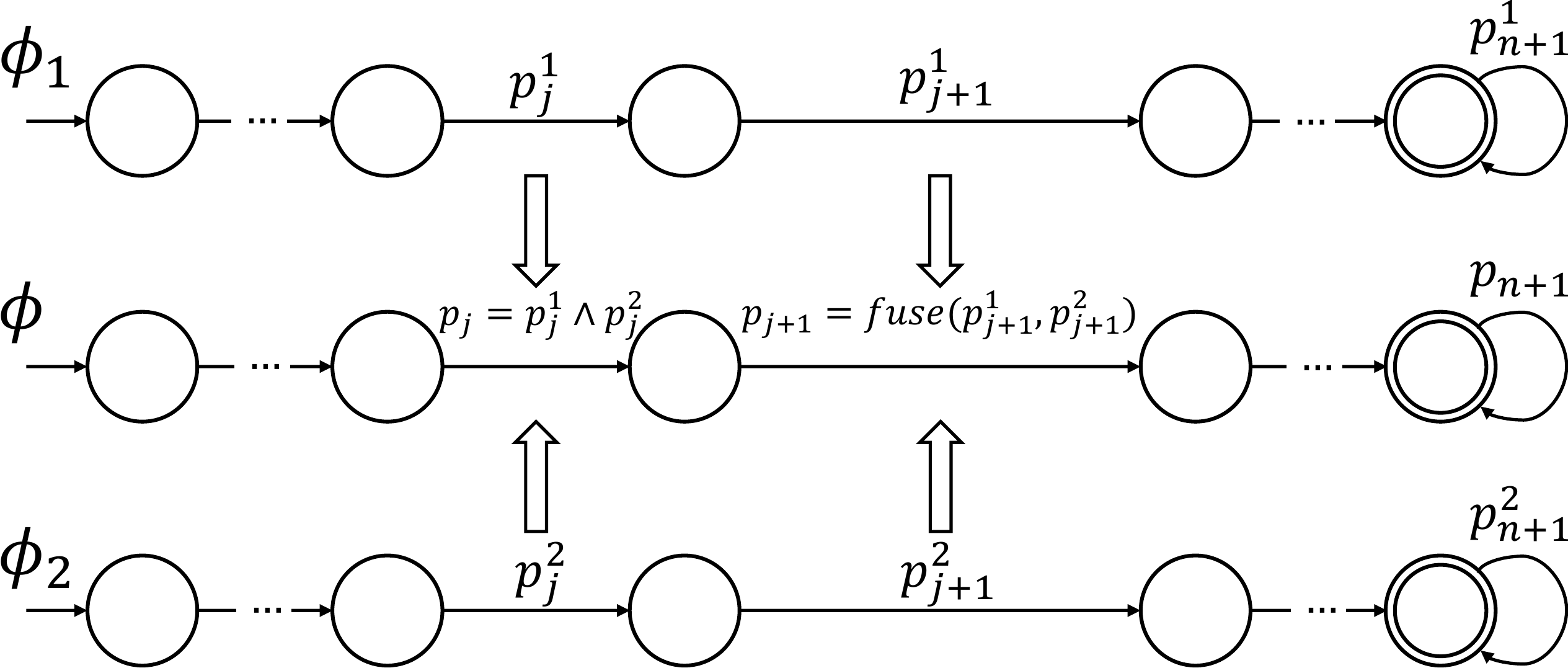}
    \caption{The synthesis $\phi$ of two trace formulas $\phi_1$ and $\phi_2$.}
    \label{fig:stf}
\end{figure}

\begin{lemma}\label{lem:stf}
Given two trace formulas $\phi_1,\phi_2$ such that $\phi_1\Rightarrow (Dom\wedge g_1\wedge \neg g_2)$ and $\phi_2\Rightarrow (Dom\wedge \neg g_1\wedge g_2)$ hold, let $\phi$ be the synthesis of $\phi_1$ and $\phi_2$. It is true that $\phi$ is a BC of $G=\{g_1, g_2\}$ under $Dom$. 
\end{lemma}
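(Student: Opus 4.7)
The plan is to verify directly that $\phi$ satisfies the three conditions of Definition~\ref{def:bc}---logical inconsistency, minimality, and non-triviality. The central structural observation that drives all three checks is that $\phi$ leaves the literal $a$ free at position $j$: since ${p_j}^1 = p\wedge a$ and ${p_j}^2 = p\wedge \neg a$ for some common conjunction $p$, the fused value is $p_j = p$, which constrains neither $a$ nor $\neg a$. Consequently, for any model $\rho$ of $\phi$ one can toggle $a$ to true at position $j$ (or, when $j=n+1$ lies in the loop, at every position $\geq n+1$) to obtain a trace $\rho'$ that still satisfies $\phi$ and additionally satisfies $\phi_1$; the symmetric toggle yields a model of $\phi_2$. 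The crux is that at positions $i\neq j$ the synthesis already enforces ${p_i}^1\wedge {p_i}^2$, which is strictly stronger than either ${p_i}^1$ or ${p_i}^2$, so the local toggle at position $j$ suffices.

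With this observation, logical inconsistency is immediate: every model of $\phi$ is a model of $\phi_1\vee \phi_2$, and each $\phi_i \Rightarrow \neg(g_1\wedge g_2)$ by hypothesis, hence $Dom\wedge g_1\wedge g_2\wedge \phi$ is unsatisfiable. For minimality, the same toggling construction directly supplies a witness: starting from any model of $\phi$---which exists by the non-falsity clause $p_i\not\equiv\bot$ in Definition~\ref{def:stf}---the toggle yields $\rho'\models \phi\wedge \phi_1$, and since $\phi_1 \Rightarrow Dom\wedge g_1\wedge \neg g_2$, we get $\rho'\models \phi\wedge Dom\wedge g_1$. A symmetric argument produces a model of $\phi\wedge Dom\wedge g_2$.

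For non-triviality, I would show $\phi\not\equiv \neg G$ where $G=g_1\wedge g_2$. Since $\phi \Rightarrow \phi_1\vee \phi_2$ and each $\phi_i\Rightarrow Dom$, we have $\phi\Rightarrow Dom$; in the typical non-degenerate scene where $\neg G\not\Rightarrow Dom$, the inequivalence is immediate. More robustly, $\phi$ as a trace formula pins the prefix behaviour $p_0,p_1,\ldots,p_n$ followed by $\Box p_{n+1}$, whereas $\neg G$ admits many distinct trace shapes; a concrete witness of $\neg G\not\Rightarrow \phi$ can be extracted from any trace violating $g_1$ whose initial step already fails $p_0$, which is available as long as some $p_i\not\equiv \top$---a property guaranteed because $\phi\wedge G$ is unsatisfiable while $G$ itself is satisfiable.

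The main obstacle I anticipate is the careful treatment of the case $j=n+1$, where the fused position sits inside the $\Box p_{n+1}$ loop: the toggle must then be applied uniformly at every suffix position, and one has to check that this global modification is consistent both with $\phi$'s own loop constraint $\Box p$ and with $\phi_1$'s stronger loop constraint $\Box(p\wedge a)$. A secondary delicate point is making the non-triviality argument airtight in the degenerate case $Dom\equiv \top$, where the witness trace has to be constructed purely from the prefix restrictiveness of $\phi$ against the comparatively loose shape of $\neg g_1\vee \neg g_2$.
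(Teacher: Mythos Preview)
Your proposal is correct and follows essentially the same approach as the paper's (sketch) proof: both establish logical inconsistency via $\phi\Rightarrow\phi_1\vee\phi_2$, and both establish minimality by observing that strengthening $\phi$ at position $j$ with the dropped literal $a$ (your ``toggle'') yields a satisfiable formula implying both $\phi$ and $\phi_1$ (resp.\ $\phi_2$), hence $Dom\wedge g_1$ (resp.\ $Dom\wedge g_2$). Your treatment is in fact more explicit than the paper's sketch---particularly your handling of the loop case $j=n+1$ and the non-triviality argument---where the paper simply asserts ``there is $\phi'\Rightarrow\phi$ such that $\phi'\Rightarrow g_1$'' and ``it is easy to check that $\phi\not\equiv \neg G$.''
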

\begin{proof}
(Sketch) Firstly, from Definition \ref{def:stf} it is not hard to prove that $\phi\Rightarrow (\phi_1\vee\phi_2)$, which implies that $\phi\Rightarrow ((Dom\wedge g_1\wedge \neg g_2)\vee (Dom\wedge \neg g_1\wedge g_2))$ holds. Based on Theorem \ref{thm:bcupper}, we know $(Dom\wedge g_1\wedge \neg g_2)\vee (Dom\wedge \neg g_1\wedge g_2)$ is a BC of $G$ under $Dom$. As a result, $Dom\wedge g_1\wedge g_2\wedge \phi$ is unsatisfiable. Secondly, we show that $Dom\wedge g_1\wedge\phi$ is satisfiable, as from Definition \ref{def:stf} there is $\phi'\Rightarrow\phi$ such that $\phi'\Rightarrow g_1$. The same applies to the proof that $Dom\wedge g_2\wedge\phi$ is satisfiable. Thirdly, it is easy to check that $\phi\not\equiv G$ is true. Therefore, $\phi$ is a BC of $G$ under $Dom$.
\end{proof}

\subsection{Implementation}\label{sec:autoimp}

We first define the \textbf{Synthesis Product} operation to implement the synthesis of trace formulas efficiently.

\begin{definition}[Synthesis Product]\label{def:sp}
Given two \buchi automata $\A_{1}=(2^{AP},Q_{1},\Delta_{1},q_{0}^{1},F_{1})$ and $\A_{2}=(2^{AP},Q_{2},\Delta_{2},q_{0}^{2},F_{2})$, the \emph{synthesis product} is defined as $\A=\A_1\land^{\ast}\A_2=(2^{AP}, Q, \Delta, q_{0}, F)$, where:
\begin{itemize}
    \item $Q=Q_{1}\times Q_{2}\times \{1,2\}$;
    \item $\Delta_{3}={\Delta}'\cup {\Delta}''$:
    \begin{itemize}[leftmargin=4pt]
        \item ${\Delta}'=\{((q_{1},q_{2},1),a,({q_{1}}',{q_{2}}',i))\mid (q_{1},a_{1},q_{1}')\in\Delta_{1},$  $(q_{2},a_{2},q_{2}')$ $\in \Delta_{2}$, if $q_{1}\in F_{1}$ then $i=2$ else $i=1$, and \bm{$if\ a_{1}=a_{2}\ then\ a=a_{1}=a_{2}\ else\ a=fuse(a_{1},a_{2})$} $\}$,
        \item ${\Delta}''=\{((q_{1},q_{2},2),a,({q_{1}}',{q_{2}}',i))\mid (q_{1},a,q_{1}')\in \Delta_{1}$, $(q_{2},a,q_{2}')$ $\in \Delta_{2}$, if $q_{2}\in F_{2}$ then $i=1$ else $i=2$, and \bm{$if\ a_{1}=a_{2}\ then\ a=a_{1}=a_{2}\ else\ a=fuse(a_{1},a_{2})$} $\}$;
    \end{itemize}
    \item $q_{0}=q_{0}^{1}\times q_{0}^{2}\times \{1\}$;
    \item $F=\{(q_{1},q_{2},2)\mid q_{2}\in F_{2}\}$.
\end{itemize}
\end{definition}

Synthesis Product is the ordinary \buchi automata product with little manipulation. For the calculation of each transition, a new transition that $a=fuse(a_{1},a_{2})$ is added. Therefore, there are two kinds of transitions in the product result automaton, (1) transitions calculated by intersection, and (2) transitions calculated by fusion.

\begin{theorem}\label{thm:product}
Given $G=\{g_1,g_2\}$ be a goal set, $Dom$ be the domain and trace formulas $\phi_{1},\phi_{2}$ that $\phi_1 \Rightarrow Dom\wedge \neg g_1\wedge g_2$, $\phi_2 \Rightarrow Dom\wedge g_1\wedge \neg g_2$. Let $\A = \A_{Dom\wedge \neg g_1\wedge g_2}\wedge^* \mathcal{A}_{Dom\wedge g_1\wedge \neg g_2}$. 
If $\phi$ is the synthesis of $\phi_{1}$ and $\phi_{2}$, then $\mathcal{L}(\phi)\subseteq \L(\A)$.
\end{theorem}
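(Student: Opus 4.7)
The plan is to start from an arbitrary $\rho \models \phi$ and explicitly construct an accepting run of $\A$ on $\rho$. By Definition~\ref{def:stf}, there is a unique index $j$ at which $p_j = fuse(p_j^1, p_j^2)$ drops one conflicting literal $a$, while $p_i = p_i^1 \wedge p_i^2$ for all $i \neq j$; thus $\rho$ agrees with $\phi_1 \wedge \phi_2$ at every position except $j$, and at $j$ only the common part of $p_j^1$ and $p_j^2$ is forced, leaving $a$ free.

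First I would produce two companion traces $\rho^1, \rho^2$ that coincide with $\rho$ at every $i \neq j$ and at $j$ are adjusted on $a$ to satisfy $p_j^1$ and $p_j^2$, respectively. By construction $\rho^1 \models \phi_1$ and $\rho^2 \models \phi_2$; combined with the hypotheses $\phi_1 \Rightarrow Dom \wedge \neg g_1 \wedge g_2$ and $\phi_2 \Rightarrow Dom \wedge g_1 \wedge \neg g_2$, this gives $\rho^1 \in \L(\A_{Dom \wedge \neg g_1 \wedge g_2})$ and $\rho^2 \in \L(\A_{Dom \wedge g_1 \wedge \neg g_2})$, so we may fix accepting runs $r_1 = q_0^1 q_1^1 \ldots$ and $r_2 = q_0^2 q_1^2 \ldots$ of the two component automata on these traces, which by assumption visit $F_1$ and $F_2$ infinitely often respectively.

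Next I would define the candidate run $r$ of $\A$ on $\rho$ by $r(i) = (q_i^1, q_i^2, k_i)$, with the flag sequence $k_i$ generated inductively from $k_0 = 1$ using the rules of $\Delta', \Delta''$ in Definition~\ref{def:sp}. The transition relation then has to be verified case-by-case: at every $i \neq j$ both $r_1$ and $r_2$ read the same symbol $\rho[i] = \rho^1[i] = \rho^2[i]$, so the intersection branch applies with $a_1 = a_2 = \rho[i]$; at $i = j$ the labels $\rho^1[j]$ and $\rho^2[j]$ differ only in the literal $a$ by construction, so $fuse(\rho^1[j], \rho^2[j])$ is non-$\false$ and equals the common part of $p_j^1$ and $p_j^2$, which $\rho[j]$ indeed satisfies.

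Finally, I would argue B\"uchi acceptance. Because $r_1$ visits $F_1$ infinitely often, the flag flips from $1$ to $2$ infinitely often via $\Delta'$; once in phase $2$, the flag stays until $r_2$ reaches $F_2$, which also occurs infinitely often, and at each such moment the visited state $(q_i^1, q_i^2, 2)$ with $q_i^2 \in F_2$ lies in $F$ before the flag resets to $1$. Hence $r$ visits $F$ infinitely often and $\rho \in \L(\A)$. The main obstacle I anticipate is the fused transition at position $j$: one must pick the specific alphabet labels along $r_1, r_2$ at step $j$ so that they agree on every literal except $a$, ensuring $fuse$ is defined and matches $\rho[j]$; everything else, including the flag-alternation argument, reduces to the standard observation that interleaving two infinitely-often events yields infinitely many accepting visits.
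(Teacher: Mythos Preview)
Your proposal is correct and follows the same underlying approach as the paper's proof, which is merely a one-sentence sketch stating that, by Definitions~\ref{def:sp} and~\ref{def:stf}, any synthesis of trace formulas accepted by the two component automata is accepted by the synthesis product. You have simply unpacked that sketch into an explicit construction of the run on $\A$, including the flag-alternation argument for B\"uchi acceptance and the verification that the single fused step at index $j$ is well-defined; this is more detail than the paper provides, but the strategy is identical.
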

\begin{proof}
(Sketch) According to Definition \ref{def:sp} and Definition \ref{def:stf}, a trace formula $\phi$ satisfying $\mathcal{L}(\phi)\subseteq \mathcal{A}$ is a synthesis of two trace formulas $\phi_1$ satisfying $\mathcal{L}(\phi_1)\subseteq \L(\A_{1})$ and $\phi_2$ satisfying $\mathcal{L}(\phi_2)\subseteq \L(\A_{2})$.
\end{proof}

\begin{algorithm}[t]
\caption{\semanticBC: Identifying BC by Synthesis Product}
\label{alg:product}
\begin{algorithmic}[1]
\Procedure{semanticBC}{ $Scene\ S$}
\State $BCs\leftarrow \emptyset,G's\leftarrow \emptyset$
\For{each two goals $g_{i},g_{j}\in G$}
    \State $\A_{i}\leftarrow translate(Dom\cup (G\backslash g_{i}) \cup \neg g_{i})$
    \State $\A_{j}\leftarrow translate(Dom\cup (G\backslash g_{j}) \cup \neg g_{j})$
    \State $\A_{\phi}\leftarrow \A_{i}\land^*\A_{j}$
    \For{each fusion transition $t_{i}$ in $\A$}
        \State Delete all fusion transitions in $\A$ except $t_{i}$\label{alg:product:delete}
        \If{there is a run $r$ in $\A$}
            \State $BCs\leftarrow BCs\cup transferToFormula(r)$
            \State $G's\leftarrow G's\cup \{g_{i},g_{j}\}$
        \EndIf{}
        \State Restore all fusion transitions in $\A$
    \EndFor
\EndFor
\State \Return $BCs,G's$
\EndProcedure

\end{algorithmic}
\end{algorithm}

With Theorem~\ref{thm:product}, we design Algorithm~\ref{alg:product} to identify BCs by the Synthesis Product. It accepts a scene $S$ as input, and outputs BCs and their corresponding scope $G'$.

For any two goals $g_{i}$ and $g_{j}$ in the goals set, we solve their BCs. We first get two automata $\A_{Dom\land g_{i}\land \neg g_{j}}$ and $\A_{Dom\land g_{j}\land \neg g_{i}}$, and then get the automaton $\A_{\phi}$ through Synthesis Product. According to Definition~\ref{def:stf}, we need to identify the trace formulas that have only performed the fusion operation once during the run. Thus, we pick a transition edge $t_{i}$ generated by fusion, and then delete the others. Then we check whether the automaton $\A$ still has an accepting run. If there is a run and it can be converted into an \LTL formula, then we get a BC.

In the process of Synthesis Product, some edges in $\A$ are supplemented by the fuse operation, which ignores a conflict atomic proposition. We record all these edges, and we can also decide which conflict of atomic propositions can be fused. For example, in the MPC example, high water level($h$) and methane existing($m$) are the environment variables, and it is meaningless to pay attention to their contradictions. We are concerned with whether the system should turn on pump($p$) at some moment. Therefore, before running Synthesis Product, we can set $p$ be the literal that could be fused.

In the whole process of \semanticBC, \LTL-\SAT checking is not required, nor do the solved BCs need to be validated. The following theorem guarantees that the output of Algorithm \ref{alg:product} is a BC of $S$. 

\begin{theorem}[Correctness]
Every element of the output $BCs$ in Algorithm \ref{alg:product} is a BC of $S$.
\end{theorem}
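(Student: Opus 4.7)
The plan is to fix an arbitrary output $\phi \in BCs$ with its recorded scope $G' = \{g_i,g_j\}$ from $G's$, and then verify the three conditions of Definition~\ref{def:bc} by tracing $\phi$ back through Algorithm~\ref{alg:product}. Writing $Dom' = Dom \cup (G\setminus G')$, I would first reformulate the goal as showing $\phi$ is a BC of $G'$ under $Dom'$; a direct unfolding of Definition~\ref{def:bc} then shows this is equivalent to $\phi$ being a BC of $S$ relative to the pair scope $G'$, which is exactly what the algorithm's pair-indexed output advertises.

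Next I would observe that $\phi$ arises as $transferToFormula(r)$ for an accepting run $r$ of the synthesis product $\A = \A_i \wedge^* \A_j$ in which every fusion transition except one particular edge $t$ has been deleted (line~\ref{alg:product:delete}). The synthesis-product construction of Definition~\ref{def:sp} with its alternating mode component $\{1,2\}$ forces any accepting run to visit $F_1$-states and $F_2$-states infinitely often, and its transitions synchronise the two component automata step-by-step. Consequently $r$ projects canonically onto a pair of accepting runs $r_1$ in $\A_i$ and $r_2$ in $\A_j$, and each such projection yields a trace formula $\phi_1, \phi_2$ after the same prefix/loop extraction that $transferToFormula$ performs on $r$. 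By the choice of $\A_i,\A_j$ in the algorithm, $\mathcal{L}(\A_i) = \mathcal{L}(Dom' \wedge g_j \wedge \neg g_i)$ and $\mathcal{L}(\A_j) = \mathcal{L}(Dom' \wedge g_i \wedge \neg g_j)$, so $\phi_1 \Rightarrow Dom' \wedge g_j \wedge \neg g_i$ and $\phi_2 \Rightarrow Dom' \wedge g_i \wedge \neg g_j$.

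With the decomposition in hand, the crux is to match $\phi$ against Definition~\ref{def:stf}: at every position where $r$ used an intersection transition the label of $\phi$ is $p_k^1 \wedge p_k^2$, and at the single fusion position (corresponding to $t$) the label is $fuse(p_j^1,p_j^2)$. This is exactly the shape required of a synthesis of $\phi_1$ and $\phi_2$. Lemma~\ref{lem:stf} (via Theorem~\ref{thm:product}) then directly yields that $\phi$ is a BC of $G'$ under $Dom'$, which is the desired conclusion.

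The main obstacle will be justifying the two parts of the decomposition rigorously: first, that $transferToFormula$ is only invoked when the accepting run has the restricted shape prescribed by Definition~\ref{def:tf} (i.e.\ a finite prefix followed by a single self-loop with constant label), and that the ``restore'' step together with the single-edge selection really enforces exactly one fused position rather than zero or more; second, that projecting $r$ onto $r_1$ and $r_2$ preserves acceptance — this requires inspecting the mode-switch conditions in Definition~\ref{def:sp} to see that visiting an accepting $(q_1,q_2,2)$ with $q_2\in F_2$ infinitely often forces both component runs to be \buchi-accepting. A lesser but real subtlety is verifying non-triviality: since $\phi$ is a trace formula and both $\phi_1\wedge g_j$ and $\phi_2\wedge g_i$ are satisfiable by construction, $\phi$ cannot be semantically equivalent to $\neg(g_i\wedge g_j)$, closing the third clause of Definition~\ref{def:bc}.
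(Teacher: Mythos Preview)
Your proposal is correct and follows essentially the same route as the paper: argue that the single-fusion-edge restriction in line~\ref{alg:product:delete} forces any surviving accepting run of $\A_i\wedge^*\A_j$ to encode a synthesis of two trace formulas $\phi_1,\phi_2$ living in the respective factor automata, and then invoke Lemma~\ref{lem:stf}. The paper's own proof is a three-line sketch that cites Theorem~\ref{thm:product} and Lemma~\ref{lem:stf} and asserts that line~\ref{alg:product:delete} ``deletes all other languages that are not a synthesis trace formula''; your projection argument (decomposing $r$ into $r_1,r_2$ via the mode-alternating structure of Definition~\ref{def:sp}) is exactly the content that the paper leaves implicit, so you are filling in the same skeleton rather than taking a different path.
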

\begin{proof}
First from Theorem \ref{thm:product}, we know the synthesis $\phi$ of $\phi_1=Dom\wedge g_1\wedge\neg g_2$ and $\phi_2=Dom\wedge \neg g_1\wedge g_2$ satisfies that $\mathcal{L}(\phi)\subseteq \mathcal{L}(\mathcal{A})$, where $\mathcal{A} = \mathcal{A}_{\phi_1}\wedge^*\mathcal{A}_{\phi_2}$. However, $\mathcal{A}$ does not only contains languages which can be represented by the synthesis of two trace formulas from $\mathcal{A}_{\phi_1}$ and $\mathcal{A}_{\phi_2}$. Line \ref{alg:product:delete} guarantees that the algorithm deletes all other languages that are not a synthesis trace formula, i.e.,  $BCs$ stores all synthesis trace formulas. Finally, Lemma \ref{lem:stf}  shows that a synthesis trace formula is a BC of $S$. The proof is done. 
\end{proof}



\subsection{Evaluation}\label{sec:autoeva}

We use two research questions to guide our result evaluation. The experimental setup and benchmarks are the same as in Section~\ref{sec:syntac_eva}.

\subsubsection{$RQ1$. Can \semanticBC find the BCs represented by a trace formula?} To answer $RQ1$, we compare the performance of \semanticBC to that of both the JFc and \syntacBC. 

Table~\ref{tab:result} shows the number of BC solved by \semanticBC and the time required. $|BC_{t}|$ represents the number of solved BCs in the form of trace formula, $|BC_{w}|$ represents the number of words accepted by the automaton $\A_{\phi}$ but cannot be transformed into an \LTL formula\footnote{In theory, there is not always an \LTL formula which can accept the same language as a given B\"uchi automaton~\cite{GPVW95}.}, and $t$ represents the time required.

\semanticBC solves BCs on all instances excluding the ATM and RRCS. For the LAS case where JFc cannot solve BC after 10 runs, \semanticBC successfully computes one BC. For the cases LB and LC, we know that they do not contain BC on the original setting of $Dom$ and $G$ for extra goals. But the \semanticBC will reduce the goal set for solving, so as to solve the BC that exists in the $G'\subset G$. The time cost of \semanticBC has a certain correlation with the scale of the case, and when a large number of BCs could be solved, the time consumption also become larger.

We also list $|BC_{w}|$ here because we believe that although they cannot be transformed into \LTL formulas, they actually contain the semantics of boundary conditions, and by analyzing them, traces of events leading to divergence can also be learnt, so they are also valuable. In the cases of RRA, SA and LC, it may be difficult to solve a regular BC formula, but there are many accepting words that can capture the behavior of divergence.

In summary, \semanticBC is able to solve BCs with a relatively small cost of time consumption. Moreover, the cases whose BCs are solved by \semanticBC not only include cases whose BCs are difficult to be solved by other methods, but also include those cannot be solved by other methods.

\subsubsection{$RQ2$. Is the BC solved by \semanticBC easy to understand i.e. meaningful?} To answer $RQ2$, we take two case studies on Ele and ATM, comparing the BCs solved by other methods to those represented by trace formula, and decipher their meaning. 

The elevator control system is an example with two goals, the formal definition of the goals and the BCs solved by different methods are as follows:

\noindent \textbf{Goal}: $\Box( call \to \Diamond open)$

\noindent \textbf{Informal Definition}: \textit{The elevator will open the door after received a call.}

\noindent \textbf{Goal}: $\Box( \Next open \to atfloor)$

\noindent \textbf{Informal Definition}: \textit{The elevator should reach the corresponding floor before opening the door.}

\noindent \textbf{BCs}:

\begin{itemize}
    \item \syntacBC: 1. $(call \land \Box \neg open) \lor \neg \Box(\Next open \to atfloor)$ 2. $(\neg atfloor \land \Next open) \lor \neg \Box(call \to \Diamond open)$
    \item JFc(genetic algorithm): 1. $\Box(call \land \neg atfloor)$ 2. $\Next(\neg atfloor \land \Box open) \lor (call \land \Box \neg open)$
    \item \semanticBC: 1. $\neg atfloor \land \Next call \land \Next \Next\Box(\neg call \land \neg open)$ [transition from state 2 to state 3 has conflict: $open$] 
    
    2. $(\neg atfloor \land call\land \neg open ) \land \Next \Next\Box(\neg call \land \neg open)$ [transition from state 2 to state 3 has conflict: $open$]
\end{itemize}

We first try to understand two BCs solved by the \syntacBC, which are simple mutation after negating the goals. They actually describe the violation of two goals that occurred, however we cannot tell exactly what event caused the divergence.

Take the first BC solved by JFc as JFc.1, it satisfies the definition~\ref{def:tf} of trace formula, which we believe is meaningful, and it is simple to interpret such BCs. `\textit{When the elevator received a call but it has not reached the corresponding floor. And it happens forever.}' When this happens, both goals cannot be satisfied, because the opening of the elevator door requires it to reach the floor as a premise. Solving such a BC is beneficial, which means that an additional goal, e.g. `$\Box(call\to \Diamond atfloor)$', may need to be added to complete the requirement. However genetic algorithms are not guaranteed to generate such a meaningful BC, and in most cases BCs like JFc.2 are solved by the JFc framework. JFc.2 is similar to BCs solved by \syntacBC, except that they are two different special case of violations of the goals combined with $\lor$ operation.

For \semanticBC.1, it is a trace formula describing a run that can be accepted by automaton. `\textit{In the first state, the elevator did not reach the floor, and in the second state, the elevator received a call, and then the elevator was never called nor the door opened.}' Furthermore, we can also know that the goals have a divergence on whether to open the door in the second state, it will violate either goal anyway. Interpreted in the same way, \semanticBC.2 actually captures the same divergence as JFc.2.

The second example is about ATM containing three goals:

\noindent \textbf{Goal}: $\Box( (passok \land \neg lock) \to money)$[$g_{1}$]

\noindent \textbf{Informal Definition}: \textit{When the password is correct and the account is not locked, money can be withdrawn normally.}

\noindent \textbf{Goal}: $\Box( \neg passok \to (\neg money \land \Next locked ) )$[$g_{2}$]

\noindent \textbf{Informal Definition}: \textit{A wrong password is entered, then withdrawals are not allowed and the account will be locked.}

\noindent \textbf{Goal}: $\Box( locked \to \Diamond \neg locked)$[$g_{3}$]

\noindent \textbf{Informal Definition}: \textit{Accounts are unlocked after being locked for a period of time.}

\noindent \textbf{BCs}:(for brevity, $p$ stands for $passok$, $m$ for $money$, $l$ for $lock$)

\begin{itemize}
    \item \syntacBC: 1. $(\neg l\land \neg m \land p)\lor \neg \Box{\neg p \to (\neg m\land \Next l)}$
    \item JFc(genetic algorithm): 1. $\Diamond (\neg l \land \neg m \land \Next \neg l)$ 
    2. $\Diamond(\neg l\land \neg m\land p)\lor\Box(\neg l \land \neg p)$
    \item \semanticBC: 1. $(\neg l\land \neg m) \land \Next(\neg l \land m \land p) \land \Next\Next\Box(\neg l\land m\land p)$ [transition from state 0 to state 1 has conflict: $p$, conflict between $g_{1},g_{2}$]
\end{itemize}

\syntacBC.1 and JFc.2 obviously belong to that kind of trivial BC, though JFc.1 seems like an easy-to-understand BC for its not containing the $\lor$ operator. JFc.1 describes the situation that `\textit{the account is unlocked, no money out and account still unlocked at the next state}'. And \semanticBC.1 actually captures the same divergence as JFc.1, and we could know this BC cause conflicts between $g_{1}$ and $g_{2}$. The divergence happens because when the $\Next\neg l$ at the second state, then there is $p$ at first state. And since the $p\land \neg l$, there should be $m$, however it is $\neg m$.

Although we are able to interpret the divergence in \semanticBC.1, it is actually meaningless to solve for it, because in the ATM, $passok$ is a variable determined by the environment and does not need to care whether it diverges or not. Therefore, in the results shown in Table~\ref{tab:result}, we have ignored the conflict of the environment variables in all cases.

By comparing the meaning of BCs obtained by \semanticBC and BCs obtained by other methods, we can summarize three advantages of BCs solved by \semanticBC: 1. They contain a clear meaning and are definitely valuable for analysis. 2. They can narrow down the scope of the analysis by specifying which two goals the divergence occurs between. 3. They will not contain the meaningless conflict about the environment variables.

\section{Related Work}\label{sec:discuss}

The problem of how to deal with inconsistencies in requirements, i.e., inconsistency management,  has been studied extensively from different perspective \cite{HHT02,HP14,Kamalrudin09,KHG11,ESH14,EBMJ12,HKP05,NVLG13,NR99,HKP05,JBEM10,Liu10,MZ11}.  Goal-conflict analysis has been widely used to detect requirement errors in GORE. It is particularly driven by the
identify-assess-control cycle, which concentrates  on identifying, assessing and resolving inconsistencies that may falsify the satisfaction of expected goals. Another relevant topic is the obstacle analysis \cite{VL00,AKLRU12,CV12,CV14,CV15}, which captures the situation where only one goal is inconsistent with the domain properties. Notably, since obstacles only capture the inconsistency
for single goals, these approaches cannot handle the case
when multiple goals are inconsistent.

In this paper, we focus on the inconsistencies that are captured by \emph{boundary conditions} and present novel approaches to identify meaningful BCs. Extant solutions mainly fall into two kinds of categories, i.e., the construct-based and search-based strategies. For
construct-based approaches, Van Lamsweerde et al. \cite{van1998managing} proposed a pattern-based approach which only returns a BC in
a pre-defined limited form. Degiovanni et al. \cite{degiovanni2016goal} utilizes a
tableaux-based way to generate general BCs but that approach only
works on small specifications because constructing tableaux is time consuming. 
For the search-based approach, Degiovanni et al. \cite{degiovanni2018genetic} presented a genetic algorithm to identify BCs such that their algorithm can handle specifications that are beyond the scope of previous
approaches. Moreover, Degiovanni et al. \cite{degiovanni2018genetic} first proposed the concept of generality to assess BCs. Their work filtered out
the less general BCs to reduce the set of BCs. However, the
generality is a coarse-grained assessment metric.
As the number of identified inconsistencies increases, the
assessment stage and the resolution stage become very expensive and even impractical. Recently, the assessment stage in
GORE has been widely discussed to prioritize inconsistencies
to be resolved and suggest which goals to drive attention
to for refinements.

Meanwhile, some of the work \cite{CV12,CV14,CV15}
assume that certain probabilistic information on the domain
is provided so as to detect simpler kinds of inconsistencies
(obstacles).
In order to automatically assess BCs, Degiovanni et al. \cite{DCARAF18}
recently have proposed an automated approach to assess how
likely conflict is, under an assumption that all events are
equally likely. They estimated the likelihood of BCs by counting how many models satisfy a circumstance captured by a BC.
However, the number of models cannot accurately indicate the
likelihood of divergence, because not all the circumstances
captured by a BC result in divergence. 

Last year, Luo et al. \cite{luo2021identify} pointed out the shortages of the likelihood-based method and proposed a new metric called \emph{contrastive BC} to avoid
evaluation mistakes for the likelihood.
For the resolution of conflicts, Murukannaiah et al. \cite{MKTS15}
resolved the conflicts among stakeholder goals of system-to-be
based on the Analysis of Competing Hypotheses technique and
argumentation patterns. Related works on conflict resolution
also include \cite{FFSMMT09} which calculates the personalized repairs for
the conflicts of requirements with the principle of model-based
diagnosis.
However, these approaches assume that the conflicts
have been already identified, which relies heavily on the efficiency of the BC construction. 

Compared to previous approaches for the BC construction, ours are distinguished as follows. We make full usage of the syntax and semantics information of the BC, and present a simple but very efficient way to construct BCs by replacing some $g_i\in G$ with a weaker property $g_i'$ such that $g_i'\to g_i$. Moreover, we introduce the automata-based construction to construct BCs that contain the system information as a whole, rather than the trivially combination of local ones using the disjunctive operator.   

\section{Conclusion}\label{sec:conclude}
In this paper, we revisit the problem of computing boundary conditions in GORE and present two different approaches to construct BCs based on the theoretical foundations. 
Our experimental results show that, the syntactical approach can perform $>$1000X speed-up on the BC construction, and the semantics one is able to construct more meaningful BCs without losing performance. 
The success of our approaches affirms that, the problem of identifying BCs should be reconsidered in the theoretical instead of the algorithmic way.



\bibliographystyle{ACM-Reference-Format}
\bibliography{bibliography}

\end{document}